\definecolor{red}{rgb}{1,0,0}
\definecolor{green}{rgb}{0,1,0}
\definecolor{blue}{rgb}{0,0,1}
\newtheorem{theorem}{Theorem}
\newtheorem{lemma}{Lemma}
\newcommand{\scri}{\mathscr{I}}
\begin{document}
\title[Schwarzschild solution]{The conformal, complex and non-commutative structures of the Schwarzschild solution}
\author{Jonathan Holland\\ George Sparling}
\dedicatory{Dedicated to Professor James G. Holland, 1927 -- 2018}
\maketitle
\nocite{*}

\begin{abstract}

The generic null geodesic of  the Schwarzschild--Kruskal--Szekeres geometry has a natural complexification, an elliptic curve with a cusp at the singularity.  To realize that complexification as a Riemann surface without a cusp, and also to ensure conservation of energy at the singularity, requires a branched cover of the space-time over the singularity, with the geodesic being doubled as well to obtain a genus two hyperelliptic curve with an extra involution.  Furthermore, the resulting space-time obtained from this branch cover has a Hamiltonian that is null geodesically complete.  The full complex null geodesic can be realized in a natural complexification of the Kruskal--Szekeres metric.
\end{abstract}

\tableofcontents

\section{Introduction: conformal invariance of spacetime}
The eponymous space-time \cite{minkowski1909raum},  first presented by Hermann Minkowski in his great address of the 21st September 1908 at the 80th Assembly of German Natural Scientists and Physicians,  is an affine four-space equipped with a constant Lorentzian metric.   Its non-trivial geodesics, null or non-null,  are just the straight lines of the affine structure.  As described and illustrated by Minkowski, a vital feature  of this  geometry is that it is anisotropic, in that at any given point the null cone, which is ruled by the null geodesics through that point,  separates points to the past, points to the future and points which are inaccessible to any signal from or to the given point. 

Minkowski's theory was generalized in 1913 by Albert Einstein and Marcel Grossmann \cite{einstein1913entwurf}, \cite{einstein1914kovarianzeigenschaften},  the key insight of Einstein \cite{einstein1914bemerkungen} being that that a Lorentzian metric in four dimensions, no longer necessarily constant,  should describe the gravitational  potential field, such that the trajectories of free particles are the (timelike or null) geodesics of the metric.  The theory was completed in 1915 by Einstein \cite{einstein1901feldgleichungen}, who realized that the effect of matter on the gravitational field could be fully described, by equating the Einstein tensor of the metric to the energy momentum tensor of the matter, with an appropriate coupling constant.  David Hilbert \cite{hilbert1924grundlagen} showed that the resulting theory could be derived from a simple action principle, adding to the matter Lagrangian density,  the Ricci scalar of the metric times the invariant volume four-form to represent the Lagrangian density of gravity.

The Riemann curvature of a metric in dimension at least three splits naturally into two parts, one controlled by the Einstein tensor, the other not directly involving the Einstein tensor.  In three dimensions the second part vanishes identically.  In four dimensions, the Riemann tensor has twenty components, ten of these encapsulated in the Einstein tensor.  The other ten components, form a tensor, called the  Weyl tensor, which is regarded as describing the free gravitational degrees of freedom.  It can be shown that the Weyl tensor is conformally invariant: invariant under scalings of the metric, which preserve ratios of intervals, so, in particular, angles, but not the intervals themselves.  Since this realization in \cite{weyl1918reine}, \cite{schouten1921konforme}, there has been an intense focus on the question of how much of the basic physics is conformally invariant.   In particular a zero interval is conformally invariant and it emerges easily that null geodesics are conformally invariant objects. 

Minkowski  space-time can be recast in several different (but related) ways, emphasizing its (conformally flat) conformal structure.  First it can be conformally compactified with the addition of a null cone at infinity (denoted $\mathscr{I}$, or scri).   Second it embeds conformally as a part of the Einstein cylinder, the product $\mathbb{S}^1\times \mathbb{S}^3$ of a timelike Euclidean circle, $\mathbb{S}^1$, and a spacelike Euclidean three-sphere, $\mathbb{S}^3$, orthogonal to the timelike circle.  Third it may be complexified and compactified as the Klein quadric, $\mathcal{K}$ of all complex projective lines in complex projective three-space,  equivalently  the Grassmannian of all complex subspaces, of  complex dimension two,  of a complex vector space  $\mathbb{T}$ of complex dimension four.  A null geodesic of $\mathcal{K}$ is represented by a pair of subspaces $\mathbb{Z}$ and $\mathbb{W}$,  of the vector space $\mathbb{T}$, such that $\mathbb{Z}$ has dimension one, $\mathbb{W}$ has dimension three and $\mathbb{Z} \subset \mathbb{W}$.   Then the points of the null geodesic,  $(\mathbb{Z}, \mathbb{W})$ are the two-dimensional subspaces, $\mathbb{X}$, such that $\mathbb{Z} \subset \mathbb{X} \subset \mathbb{W}$.  The complex null geodesic is a Riemann sphere in  $\mathcal{K}$.    In particular, this applies to the null geodesics of real Minkowski space-time: each such may be represented as a circle on its corresponding Riemann sphere, in the complex Minkowski space-time.   In the Einstein cylinder each null geodesic traverses the circle of the $\mathbb{S}^1$ factor and simultaneously a great circle on the $\mathbb{S}^3$ factor, each at a common uniform speed, such that each circle is completed in the same affine parameter time interval.  Unwrapped in the time direction to the product $\mathbb{R} \times \mathbb{S}^3$, the null geodesic winds infinitely often around a great circle of the $\mathbb{S}^3$, as it advances in the time direction.

In each of these versions, the space of null geodesics is an homogeneous five-manifold (complex in the case of the Klein quadric) \cite{PenroseRindler2}.

\section{Mass: the metric of Karl Schwarzschild}
The passage from a conformally invariant theory to a metric theory, such as that of Albert Einstein entails the breaking of conformal invariance.  At the group theoretic level, this breaking is the reduction of the fifteen-dimensional non-compact conformal symmetry group $\mathbb{SU}(2, 2)$ to either the Poincar\'{e} group or the de Sitter or anti-de Sitter groups, each a ten-dimensional Lie group,  corresponding to the three basic geometries which underly quantum field theory and in particular allow the description of particles of a fixed non-zero rest mass.   It seems that this breaking is tied to quantum mechanics, in that physical clocks are typically constructed using quantum particles (either massless or massive).   So here we think of unparametrized null geodesics as classical and fully conformally invariant, their affine parametrization arising in connection with quantum mechanics, when the conformal invariance is broken.

In Einstein's general relativity, mass, regarded as a source of gravitational attraction arises at the most basic level in the famous metric unveiled on January 13th 1916 by Karl Schwarzschild.   The Schwarzschild metric is a static, spherically symmetric Lorentzian metric in four dimensions, with vanishing Einstein tensor.  Physically, it represents the gravitational field in empty space of a spherically symmetric body.  Following equation (14) of the paper of Schwarzschild, his metric  may be written out as follows:
\[ g =  r^2(g_\Sigma -g_2).  \]
Here $g_2$ is the metric of the unit two-sphere, $\mathbb{S}^2$,  in Euclidean three-space, $\mathbb{R}^3$.   Also $g_\Sigma$ is a Lorentzian "string'' metric defined on a two-dimensional space, $\Sigma$, with co-ordinates $(t, r)$, given explicitly by the formula:
\[ g_\Sigma = \frac{(r - 2m)}{r^3}\left(dt  + \frac{rdr}{r - 2m}\right)\left(dt  - \frac{rdr}{r - 2m}\right). \]
Here the (constant) mass parameter is $m$,  which is taken to be real and positive.  Also  $r > 2m$ is real and $t$ is real.

Using the standard dot product of $\mathbb{R}^3$,  the metric $g_2$ is as follows:
\[  g_2 =   (dx).(dx), \hspace{7pt} x.x = 1,\hspace{7pt}x.dx = 0, \hspace{7pt} x \in \mathbb{R}^3.\]
The metric has an apparent singularity at $r= 2m$.  This special value of $r$ defines the event horizon of the black hole, and is called the Schwarzschild radius.  Initially following the introduction of the solution, it was incorrectly believed that the Schwarzschild radius was a singular point for the geometry.  But it is merely a coordinate singularity \cite{lemaitre1933univers}, \cite{synge1950gravitational}. An observer falling into the black hole will cross the horizon as if it were a perfectly ordinary region of space-time.   However, a cosmological observer will see the matter indefinitely redshifted and not crossing the horizon.  From within the horizon, no signal may escape to infinity.  Inside the horizon, we can again use the Schwarzschild metric, which shows that there is a curvature singularity at $r = 0$.   Note that the mass parameter $m$ can be eliminated by rescaling the co-ordinates:  $r \rightarrow 2mr, \hspace{7pt}$ $t \rightarrow 2mt,\hspace{7pt} $  $x \rightarrow x$,   in which case we get:
\[  \frac{g}{4m^2 r^2} =  g_\Sigma - g_2, \]
  \[ g_\Sigma =  \frac{(r - 1)}{r^3}\left(dt  + \frac{rdr}{r - 1}\right)\left(dt -  \frac{rdr}{r - 1}\right),   \hspace{10pt}  (t, r) \in \mathbb{R}^2,  \hspace{10pt}r > 1,   \]
\[ g_2 = (dx).(dx),\hspace{7pt}  x.x =1, \hspace{7pt}  x.dx =0, \hspace{7pt} x\in \mathbb{R}^3.\]
The main objective of this work is to understand the nature of the null geodesics of Schwarzschild from a conformally invariant perspective.  Accordingly, we may eliminate the overall factor of $4mr^2$ by a conformal transformation, so it suffices to study the  metric $g$, conformal to that of Schwarzschild, given by:
\[ g = g_\Sigma - g_2.\]
Note that this metric has no free parameters: there is a \emph{\textbf{universal}} such model.  A fortiori there is no natural flat space limit.  The space-time is now a direct product $\Sigma \times \mathbb{S}^2$, where $\Sigma$ is the two-dimensional string space, equipped with the metric $g_\Sigma$ and $\mathbb{S}^2$ is the unit two-sphere in three dimensions,  equipped with metric $-g_2$,  the negative of its standard metric.

\begin{centering}
\begin{figure}
\begin{turn}{45}
\begin{tikzpicture}
\node at (0,0) {\includegraphics[scale=0.25]{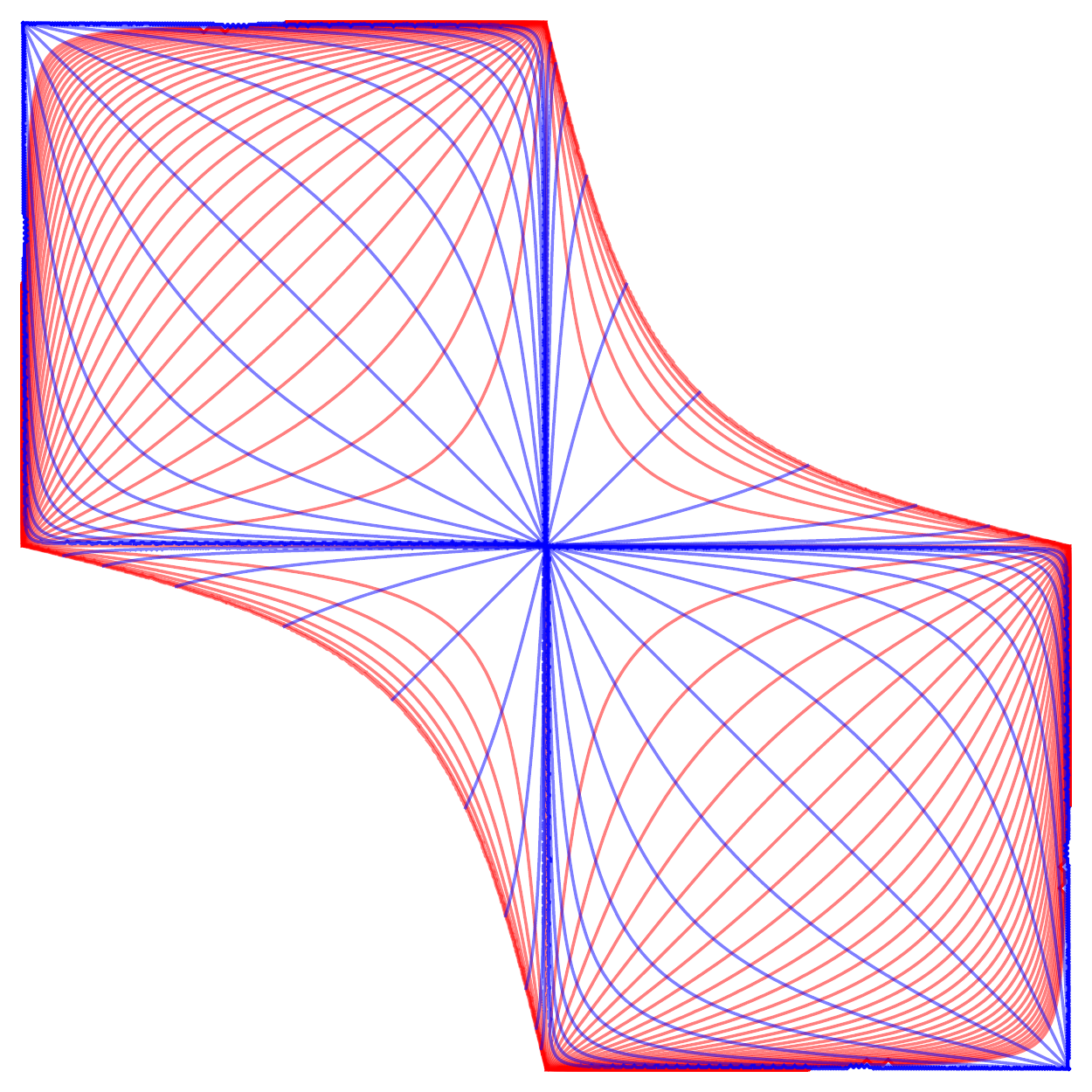}};
\node[rotate=-45] at (1/1.4,-1/1.4) {$I$};
\node[rotate=-45] at (0.3,0.3) {$II$};
\node[rotate=-45] at (-1/1.4,1/1.4) {$III$};
\node[rotate=-45] at (-0.3,-0.3) {$IV$};
\end{tikzpicture}
\end{turn}
\caption{Penrose--Carter diagram of the Kruskal--Szekeres string, with $t,r$ level curves shown, and labeled quadrants.  The ``physical'' space-time of Schwarzschild comprises quadrants $I,II$.}
\end{figure}
\end{centering}

We record some other important descriptions of the string metric, $g_\Sigma$.    First  put 
$$u = t + r  + 2^{-1} \ln\left((r- 1)^2\right),$$
so 
$$du = dt + \frac{rdr}{r - 1}$$ 
and $$v = - t + r  + 2^{-1} \ln\left((r- 1)^2\right),$$ 
so 
$$dv = - dt + \frac{rdr}{r -1}.$$
Then we have:
\begin{align*}
 g_\Sigma &=    \frac{(1 - r)}{r^3}(du)(dv) \\
& =  r^{-3}(r -1)(du)^2-2r^{-2} (du)(dr)  \\
& =  r^{-3}(r -1)(dv)^2-2r^{-2} (dv)(dr).
\end{align*}
Putting $r = s^{-1}$, we get instead:
\[ g_\Sigma =    s^2(s - 1)(du)(dv) =  2(du)(ds) + s^2(1 -s)(du)^2 = 2(du)(ds) + s^2(1 -s)(du)^2.\]
These co-ordinate systems are due to Arthur Eddington \cite{1924Natur.113..192E} and David Finkelstein \cite{1958PhRv..110..965F} and show in particular that $r = 1$ is not a singularity of the metric and that $g_\Sigma$ extends smoothly through $r = \infty$ (so through $s = 0$).   The hypersurface $s = 0$ is called conformal infinity, $\mathscr{I}$, or scri and has two pieces,  denoted $\mathscr{I}^-$ for the $(u, s)$ coordinate system (here $v$ is infinite at $s =0$) and $\mathscr{I}^+$  for the $(v, s)$ system (here $u$ is infinite at $s = 0$).   Next we put:
\[ p = e^{\frac{u}{2}} = e^{\frac{1}{2}(r + t - 1)}\sqrt{r -1}, \hspace{10pt}  q = e^{\frac{v}{2}} = e^{\frac{1}{2}(r - t - 1)}\sqrt{r -1}.\]
These are the coordinates of  Martin Kruskal \cite{kruskal1960maximal} and George Szekeres \cite{szekeres1960singularities}.   Then we have:
\begin{equation}\label{KSg} g_\Sigma =  -\frac{ 4 e^{-W(pq)} dpdq}{(1 + W(pq))^3}.\end{equation}
Here $W(z)$ is the real-valued Lambert $W$ function, defined for  $z$ real and $ez \ge -1$, implicitly by the formula:
\[   e^{W(z)} W(z) = z.\]
Note that $r = 1 + W(pq) $ and $pq^{-1} = e^t$.  This metric is real analytic in the region of the $(p, q)$ plane given by the inequality $epq > -1$.

\section{The cotangent bundle}
Passing to the cotangent bundle of the spacetime, the Hamiltonian $\mathcal{H}$ for the geodesics may be written $\mathcal{H}= H_\Sigma - H_2$, where $H_\Sigma$ is the Hamiltonian for the string metric $g_\Sigma$ and $H_2$ is that for the standard sphere metric.  Then $H_\Sigma, H_2$ and $\mathcal{H}$ pairwise Poisson commute.  Also $H_2 \ge 0$.  Along a geodesic of $g$, $\mathcal{H}$, $H_\Sigma$ and $H_2$ are constant.  For the null geodesics of the space-time, we need $\mathcal{H}$ to have the value zero, so $H_\Sigma = H_2$.  

Consequently,  a null geodesic of the spacetime may be described as a pair of geodesic curves, one in the space   $\Sigma$ and one in the sphere, each traced with a common affine parameter, each with the same non-negative energy.   

There is a special case that $H_\Sigma = H_2 = 0$:  this is the case that the geodesic on the two-sphere is just a point.  Then on the string side, we have either a point, which means that the the Schwarzschild null geodesic is a fixed point for all parameter time (a trivial case that we do not consider further), or we have a null geodesic in $\Sigma$, which does not change its angular position on the sphere, so is radial.  Turning to the generic case that $H_\Sigma = H_2 \ne 0$, we need $H_\Sigma = H_2 > 0$, so we pair a (planar) great circle on the two-sphere, traced at a uniform rate, with a time-like geodesic of the string metric.  Typically the string geodesic will reach the horizon,  $r = 1$,  in finite parameter time, or go out to infinity in $r$, so, in general,  the range of the common affine parameter  is not the whole of the real line.  Our first main result is:
\begin{theorem} There exists a real analytic extension of the cotangent bundle of $\Sigma$,  equipped with a real analytic extension of the geodesic Hamiltonian $H_\Sigma$, that is timelike and null geodesically complete. \end{theorem}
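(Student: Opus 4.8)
The plan is to exhibit charts in which $H_\Sigma$ is a \emph{polynomial} in the position and momentum variables — hence real analytic even where $g_\Sigma$ degenerates — and then to remove the one genuine incompleteness, the one at $r=0$, by a branched double cover. In the Eddington--Finkelstein coordinates $(u,r)$ one reads off from $g_\Sigma=r^{-3}(r-1)(du)^2-2r^{-2}(du)(dr)$ the inverse metric and obtains
\[ H_\Sigma=-r^2\,\xi_u\xi_r-\tfrac12\,r(r-1)\,\xi_r^2, \]
a polynomial on the chart $\{(u,r,\xi_u,\xi_r)\}\cong\mathbb{R}^4$; in particular it extends real analytically across the horizon $r=1$ (the transversal crossing of Eddington and Finkelstein) and across $r=0$, and symmetrically in the chart $(v,r)$. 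Near $\scri$ one sets $s=r^{-1}$, where $g_\Sigma=2(du)(ds)+s^2(1-s)(du)^2$ yields the polynomial $H_\Sigma=\xi_u\xi_s-\tfrac12\,s^2(1-s)\,\xi_s^2$, real analytic across $s=0$ and hence extendible to $s<0$; likewise for $(v,s)$. These charts, together with the Kruskal--Szekeres chart \eqref{KSg} that welds the four quadrants $I$--$IV$ and in which $H_\Sigma=-\tfrac12\,\xi_p\xi_q\,(1+W(pq))^3 e^{W(pq)}$ is real analytic on $\{e\,pq>-1\}\times\mathbb{R}^2$, are to be assembled into a real analytic four-manifold $\widehat{T^*\Sigma}$ carrying a real analytic function $\widehat H_\Sigma$ restricting to $H_\Sigma$; all transition maps ($r=\nu^2$ below, $s=r^{-1}$, and Kruskal--Eddington) are real analytic.

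The one real obstacle is the singularity. Because the metric functions are independent of $u$ (resp. $v$), the momentum $E:=\xi_u$ is conserved; together with $\widehat H_\Sigma=h$ this decouples the radial motion as $\dot r^2=P_{E,h}(r)$, where
\[ P_{E,h}(r)=E^2r^4-2hr^2+2hr=r\,(E^2r^3-2hr+2h). \]
For $h>0$ the geodesic reaches $r=0$ in finite affine parameter with $\xi_r=O(r^{-1/2})\to\infty$, and the locus $\{r=0\}\subset\{\widehat H_\Sigma=0\}$ is an invariant set on which the flow reduces to $\dot\xi_r=-\tfrac12\xi_r^2$, which blows up in finite parameter. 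The remedy — the hard part — is the branched double cover $r=\nu^2$, $\xi_\nu=2\nu\,\xi_r$, under which
\[ \widehat H_\Sigma=-\tfrac12\,\nu^3\,\xi_u\xi_\nu-\tfrac18\,(\nu^2-1)\,\xi_\nu^2 \]
is again a polynomial, now real analytic across $\nu=0$, with $\widehat H_\Sigma|_{\nu=0}=\tfrac18\xi_\nu^2$ and $\dot\nu=\tfrac14\xi_\nu$ along $\{\nu=0\}$, so that the Hamiltonian vector field is transverse to $\{\nu=0\}$ whenever $\xi_\nu\neq0$; the former blow-up has become the finite value $\xi_\nu^2=8h$. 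Adjoining the sheet $\nu<0$ produces the region beyond the singularity, and there the radial equation becomes $4\dot\nu^2=\widetilde P_{E,h}(\nu)=E^2\nu^6-2h\nu^2+2h$, an even sextic whose projective model $w^2=\widetilde P_{E,h}(\nu)$ is precisely the genus two hyperelliptic curve carrying the extra involution $\nu\mapsto-\nu$ announced in the abstract.

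Completeness of the flow of $\widehat H_\Sigma$ on $\{\widehat H_\Sigma=h\}$ for every $h\ge 0$ then reduces to a phase-line analysis of $P_{E,h}$, of $\widetilde P_{E,h}$, and of the cubic $Q_{E,h}(s)=2hs^3-2hs^2+E^2$ arising in the $s$-chart. In each chart $X_{\widehat H_\Sigma}$ is polynomial, so an integral curve can fail to continue only by escaping every compact set of $\widehat{T^*\Sigma}$ in finite parameter; one checks that every such exit is, in the appropriate chart, either a transverse crossing of a regular hypersurface — $\{\nu=0\}$, or $\{s=0\}$ where $\dot s=\mp|E|\neq0$ — through which the polynomial flow simply continues, or a simple real zero of the governing polynomial, which is a bona fide turning point at which the two-dimensional flow smoothly reverses, or, in the radial case $h=0$ (where $\nu=0$ is an equilibrium approached only as $\lambda\to\pm\infty$) and at the exceptional energies where $\widetilde P_{E,h}$ acquires a double root (asymptotic circular orbits), a state attained only in infinite parameter. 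The point requiring care is that the two extensions used — the double cover at each singularity and the $s=r^{-1}$ extension at each component of $\scri$ — do not breed new finite-parameter boundaries: the regions beyond $\scri$ are bounded for geodesics by simple zeros of $Q_{E,h}$, and any singular locus encountered inside a cover is again resolved by the same substitution $\nu=\sqrt r$, so the atlas closes. Granting this, every timelike ($h>0$) and null ($h=0$) geodesic of $\widehat H_\Sigma$ is defined for all $\lambda\in\mathbb{R}$, which is the assertion.
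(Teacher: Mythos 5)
Your proposal is correct and follows essentially the same route as the paper: the branched double cover $r=x^2$ (your $\nu$) renders the geodesic Hamiltonian polynomial, hence real analytic, across $r=0$, with Eddington--Finkelstein and Kruskal--Szekeres patches handling the horizon and the $s=r^{-1}$ patch handling $\mathscr{I}$. You in fact supply more than the paper, which only illustrates the construction at the singularity and asserts completeness; your phase-line analysis of the radial polynomials is the verification the paper leaves implicit, and your $2H_\Sigma = r(1-r)\xi_r^2 - 2r^2\xi_u\xi_r$ corrects a dropped factor of $r$ in the Hamiltonian displayed in \S\ref{ResolveSingularity}.
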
 

The proof involves gluing together two Kruskal-Szekeres patches to handle the  crossings of the singularity, matched appropriately with Eddington-Finkelstein patches to handle the crossings of null conformal infinity, $\mathscr{I}$, scri.  We illustrate the proof by discussing the crossing of the singularity at $r = 0$.   We simply replace the $r$-coordinate by its square root: so we put $r =x^2$.

Then the string metric near $x = 0$ is:
\[ g_{\Sigma} = x^{-6}(x^2 - 1)\left((dt)^2  - \left(\frac{2x^3 dx}{x^2 - 1}\right)^2\right).\]    
For the contact one-form $\alpha = Tdt + Xdx$, the corresponding Hamiltonian, $H_\Sigma$,  on the cotangent bundle, is:
\[  H_{\Sigma} = \frac{1}{8(x^2 - 1)}  \left(4x^{6}T^2  -  (x^2 - 1)^2X^2\right).\]
The Hamiltonian flow gives a smooth evolution near $x = 0$ and geodesics, on reaching $x = 0$,  just cross from $x$ positive to $x$ negative or vice-versa.  Note that at $x = 0$,  the metric is singular, but the geodesic flow is not.  On each side of $x = 0$, we glue this structure to a copy of the standard Kruskal-Szekeres Hamiltonian flow, using the coordinate relations discussed in the previous section.

What this change of variables does to a null geodesic is discussed in \S\ref{ResolveSingularity}.

\section{Complex periodicity and temperature}
Following the discovery by Stephen Hawking \cite{hawking1971gravitational} and Jacob Bekenstein \cite{bekenstein1973black} that dynamical black holes radiate as black bodies, Gary Gibbons and Hawking \cite{gibbons1977action} showed that one could  account for the existence of the black hole temperature, if one demanded that the Schwarzschild solution complexify nicely, in which case it had to  be periodic in the imaginary part of the time co-ordinate $t$,   with period $8\pi m$ (here the universal required imaginary period is $4\pi$ after the coordinate rescaling given above).   Their argument may be paraphrased as follows.

Consider the string metric $g_\Sigma$, just outside the horizon,   $r = 1$. Put $r = 1 + s^2$ and $t = 2i\tau$, with $s$ and $\tau$ real and $s$ small.  To lowest order in $s$ we get:
\[ g_\Sigma =  - 4( (ds)^2 + s^2(d\tau)^2).\]
This is a regular (flat) metric,  provided $\tau$ is an ordinary polar co-ordinate, so provided that $\tau$ is considered modulo integral multiples of $2\pi$.  Correspondingly $t$ is considered modulo integral multiples of $4\pi i$.   Equivalently we may regard the co-ordinate $\displaystyle{e^{\frac{t}{2}}}$ as being more fundamental.   Finally one interpets the periodicity in imaginary time as a KMS condition for thermal equilibrium and the correct Hawking temperature  follows: in the units used here,  the temperature is $(4\pi)^{-1}$.  This argument, whilst formally convincing,  may be a little suspect, in that it appears to require a double cover branched at the horizon.  We now show that this same periodicity can be inferred by studying the Schwarzschild null geodesics.  First it emerges easily that each generic null geodesic is naturally complex.   To see this most clearly, perhaps, it is convenient to introduce the Eddington-Finkelstein co-ordinate $u$ given in terms of $u$ and $r$ by:
\[ u = t + r - 1 + 2^{-1} \ln\left((r -1)^2\right), \hspace{10pt} du = dt + \frac{rdr}{r - 1}.\]
Also put $\omega = r^{-1} - 3^{-1}$.   Then we have:
\[ g_\Sigma =   2(du)(d\omega)  - 2 Q(\omega) (du)^2,  \hspace{10pt} Q(\omega) =  \frac{\omega^3}{2} - \frac{\omega}{6} - \frac{1}{27}.\]
Note that $g_\Sigma$ is now defined and smooth,  for all $(u, \omega) \in \mathbb{R}^2$.

 Let the canonical one-form of the co-tangent bundle be $\alpha = Udu + \Omega d\omega$, with $(U, \Omega) \in \mathbb{R}^2$.     Then the Hamiltonian for the geodesics of $\Sigma$ is then:
\[ H_\Sigma = \Omega(U  + Q(\omega)\Omega).\] Hamilton's equations for the affine parameter $s$, give:
\[ \frac{du}{ds} =  \Omega, \hspace{10pt} \frac{dU}{ds} = 0, \]
\[ \frac{d\omega}{ds} =  U + 2Q(\omega) \Omega, \hspace{10pt} \Omega' =   
 -  \frac{\Omega^2}{6}   (3\omega + 1)(3\omega - 1).\] 
Then $H_\Sigma = H$, a constant along the flow and  we have, in particular:
\[ \left(\frac{d\omega}{ds}\right)^2 -   U^2 = \left(\frac{d\omega}{ds} -   U\right) \left(\frac{d\omega}{ds} +  U\right)  =  4Q(\omega) \Omega  \left(U + Q(\omega) \Omega \right)  = 4Q(\omega)H.  \]
Assuming that $H > 0$, we put $z = s\sqrt{2^{-1} H}$, giving:
\[   \left(\frac{d\omega}{dz}\right)^2 = 8Q(\omega)  + \frac{2U^2}{H} =   q(\omega), \]
\[    q(\omega) = 4\omega^3 - g_2 \omega - g_3,  \hspace{10pt} g_2 = \frac{4}{3}, \hspace{10pt} g_3 =   \frac{8H - 54U^2}{27H}.\] 
The cubic $q(w)$ has pairwise distinct roots, provided $g_2^3 - 27g_3^2 \ne 0$, so here provided that $U(8H -27U^2) \ne 0$.  We focus on the generic case that $H > 0$, $U \ne 0$ and $8H - 27U^2 \ne 0$.  
Then we have:
\begin{lemma}  The $\omega$ coordinate along a generic timelike geodesic, with rescaled affine parameter $z = s\sqrt{2^{-1} H}$ is a real curve on the (complex) elliptic curve  whose equation is in the standard  Weierstrass form:
\[ y^2 = q(\omega), \]
\[ q(\omega) = 4\omega^3 - g_2 \omega - g_3, \hspace{10pt} g_2 = \frac{4}{3}, \hspace{10pt} g_3 =   \frac{8H - 54U^2}{27H}.\] 
In terms of the parameter $z$, we  have the formulae of Weierstrass:
\[  y = \wp'(z - z_0, {g_2, g_3}), \hspace{10pt} \omega = \wp(z - z_0, {g_2, g_3}).\]
Here the prime denotes differentiation with respect to $z$.  Also $\omega$ is chosen to blow up (with a double pole) at the point $z = z_0$, for some $z_0$.
\end{lemma}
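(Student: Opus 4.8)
The plan is to observe that, granted the uniformization theorem for elliptic curves, the content of the lemma is essentially forced by the first-order equation $\left(\frac{d\omega}{dz}\right)^2 = q(\omega)$ already derived above. First I would record the discriminant of the cubic: a direct computation gives
\[ g_2^3 - 27 g_3^2 = \frac{64}{27} - \frac{(8H - 54U^2)^2}{27 H^2} = \frac{4U^2(8H - 27U^2)}{H^2}, \]
which is nonzero precisely under the genericity hypotheses $U \ne 0$ and $8H - 27U^2 \ne 0$; hence $q$ has three distinct (complex) roots and $\{y^2 = q(\omega)\}$ is a smooth plane cubic, an elliptic curve.

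Next I would invoke the classical uniformization: since $(g_2,g_3)$ has nonvanishing discriminant, there is a lattice $\Lambda \subset \mathbb{C}$ with Weierstrass invariants $g_2(\Lambda)=g_2$ and $g_3(\Lambda)=g_3$, whose Weierstrass function $\wp = \wp(\,\cdot\,; g_2,g_3)$ satisfies identically $(\wp')^2 = 4\wp^3 - g_2\wp - g_3 = q(\wp)$, with $z \mapsto (\wp(z),\wp'(z))$ surjecting onto the curve (the point at infinity corresponding to $z\in\Lambda$). It then remains to match the geodesic's $\omega(z)$ with a translate of $\wp$. Differentiating $\left(\frac{d\omega}{dz}\right)^2 = q(\omega)$ and cancelling $\frac{d\omega}{dz}$ (legitimate away from the isolated turning points) yields the second-order equation $\frac{d^2\omega}{dz^2} = \tfrac12 q'(\omega) = 6\omega^2 - \tfrac12 g_2$, which is exactly the equation satisfied by $\wp$. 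Since $(\omega(z_\ast),\omega'(z_\ast))$ lies on the cubic for each $z_\ast$ in the domain, surjectivity of the $\wp$-parametrization lets me choose $z_0 \in \mathbb{C}$ with $\wp(z_\ast - z_0) = \omega(z_\ast)$ and $\wp'(z_\ast - z_0) = \omega'(z_\ast)$; uniqueness of solutions of the second-order ODE with prescribed value and first derivative then forces $\omega(z) = \wp(z - z_0)$ on the domain, hence $y := \frac{d\omega}{dz} = \wp'(z - z_0)$, and analytic continuation propagates the identity. The double pole of $\wp$ at the origin of $\mathbb{C}/\Lambda$ is precisely the asserted blow-up of $\omega$ at $z = z_0$. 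Equivalently, one may separate variables, $z - z_0 = \pm\int^\omega dt/\sqrt{q(t)}$, and recognise the right-hand side as the elliptic integral inverting $\wp$.

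Finally, since $z = s\sqrt{2^{-1}H}$ with $s$ the (real) affine parameter and $H>0$, the parameter $z$ ranges over a real interval, so $z \mapsto \omega(z) = \wp(z-z_0)$ is a real-analytic curve on the elliptic curve; and because $\omega = r^{-1} - 3^{-1}$ and $y = \frac{d\omega}{dz}$ are real along the geodesic, this curve lies in the real locus $\{y^2 = q(\omega)\}\cap\mathbb{R}^2$.

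The step I expect to be the main obstacle — really the only non-formal point — is the identification of $\omega$ with a translate of $\wp$ as \emph{global} meromorphic functions: one must check that the solution does not branch at the turning points (where $q(\omega)$ vanishes, so $\frac{d\omega}{dz} = 0$ but $\frac{d^2\omega}{dz^2} = \tfrac12 q'(\omega) \ne 0$ since the roots are simple, so $\omega$ stays regular there even though $z$ as a function of $\omega$ does not), that meromorphic continuation is unobstructed, and that the translation constant $z_0$, together with the lattice of periods, is consistently determined — all classical facts going back to Weierstrass, but worth spelling out here.
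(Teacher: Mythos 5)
Your proposal is correct, and it is essentially the argument the paper leaves implicit: the lemma is stated there without a separate proof, as the classical Weierstrass inversion of the first-order equation $\left(\frac{d\omega}{dz}\right)^2=q(\omega)$ once the discriminant $g_2^3-27g_3^2=\frac{4U^2(8H-27U^2)}{H^2}$ (which matches the paper's later computation $\Delta q=16(g_2^3-27g_3^2)=\frac{64U^2}{H^2}(8H-27U^2)$) is checked to be nonzero in the generic case. Your added care about the turning points and the passage through the second-order equation $\omega''=\tfrac12 q'(\omega)$ is exactly the right way to make that classical step rigorous.
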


  \begin{centering}
    \begin{figure}
      \includegraphics[width=200pt,height=120pt]{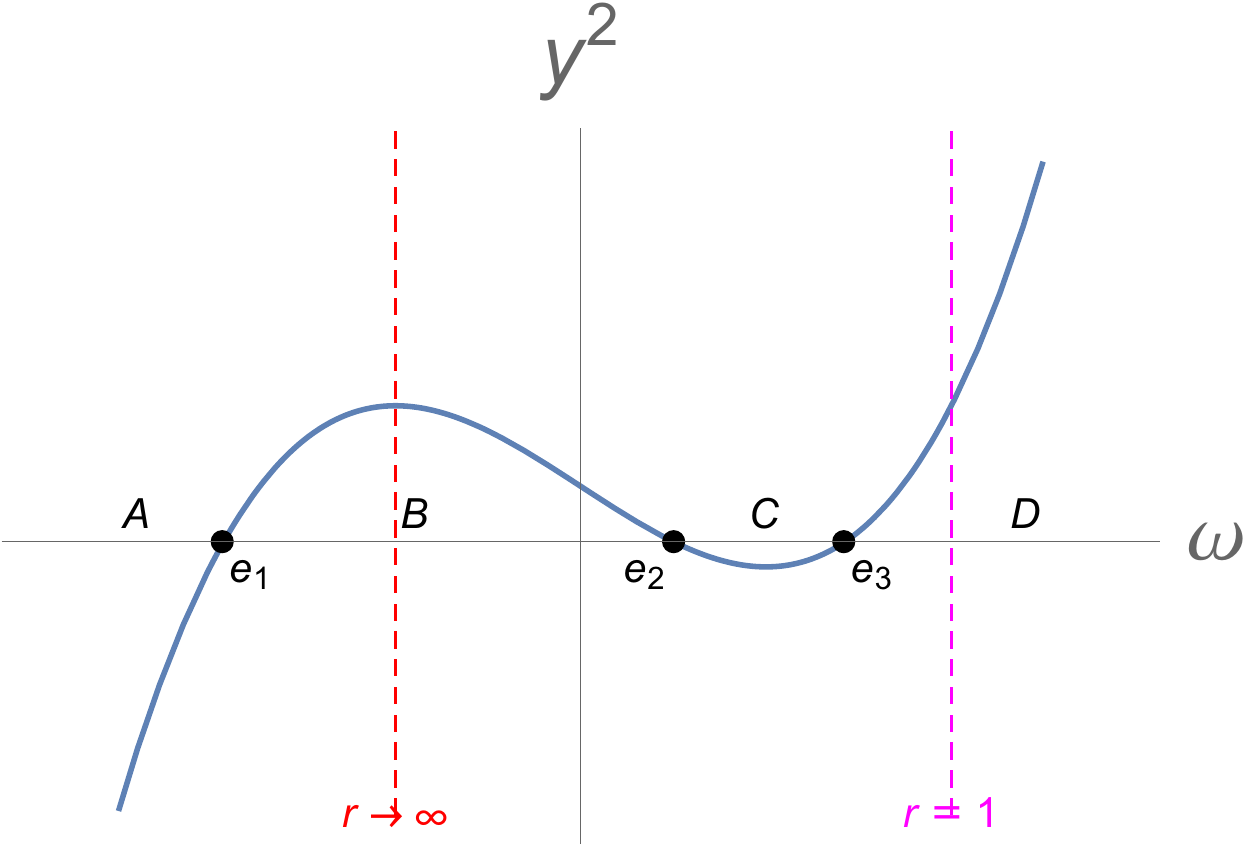}
      \caption{Plot of $y^2=q(\omega)$, in the case where $q$ has three real roots.  The singularity is at $\omega=\infty$ in these coordinates, $\omega=-1/3$ is the point at $\mathscr I$, and $\omega=2/3$ is the horizon.}
\label{cubicplot}
    \end{figure}
  \end{centering}

We turn to the coordinate $u$.  Its evolution equation is:
\[    du =  \Omega ds,  \hspace{10pt} U\Omega + Q(\omega)\Omega^2 + U\omega - H = 0.\]
Our second main theorem, the analogue of the result of Gibbons and Hawking,  in the present context,  is as follows:
\begin{theorem}\label{uTheorem} The coordinate $u$ defines a function of the Weierstrass parameter $z$, provided its values are taken in the space $\mathbb{C}/\Lambda$, where $\Lambda$ is the lattice of integer multiples of $4\pi i$.  Equivalently, the coordinate $e^{\frac{u}{2}}$ is well-defined as an holomorphic function of the complex affine parameter $z$.
\end{theorem}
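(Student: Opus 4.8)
The plan is to convert the constraint $H_\Sigma=H$ into an explicit formula for $du$ along the geodesic, recognise it as an abelian differential of the third kind on the elliptic curve of the Lemma, and then prove that every period of that differential lies in $\Lambda=4\pi i\,\mathbb{Z}$. For the formula: along the flow $\tfrac{du}{ds}=\Omega$, while $H_\Sigma=H$ reads $Q(\omega)\Omega^2+U\Omega-H=0$, and combined with $\tfrac{d\omega}{ds}=U+2Q(\omega)\Omega$ this gives $2Q(\omega)\Omega=\tfrac{d\omega}{ds}-U$. Passing to $z=s\sqrt{H/2}$ and writing $y=\tfrac{d\omega}{dz}=\wp'(z-z_0)$, $c:=U\sqrt{2/H}$, one gets $\tfrac{du}{dz}=\tfrac{y-c}{2Q(\omega)}$; the universal identity $8Q(\omega)=q(\omega)-c^2=4\bigl(\omega-\tfrac23\bigr)\bigl(\omega+\tfrac13\bigr)^2$ (using $g_2=\tfrac43$ and $g_3+c^2=\tfrac8{27}$) together with $y^2=q(\omega)$ then collapses this to
\[ \frac{du}{dz}=\frac{4}{\wp'(z-z_0)+c}. \]
Thus $u(z)=4\int \frac{dz}{\wp'(z-z_0)+c}$, which is a priori multivalued, and the theorem is exactly the statement that its full period lattice sits inside $\Lambda$.

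Next I would locate the poles and residues. Since $\wp'(w)^2=q(\wp(w))$, the zeros of $\wp'(w)+c$ lie over the roots of $q(\omega)=c^2$: over the horizon $\omega=\tfrac23$ and over $\mathscr{I}$ at $\omega=-\tfrac13$. At $\omega=\tfrac23$ one has $\wp''=6\wp^2-\tfrac{g_2}{2}=2\neq0$, so $\wp'+c$ has a simple zero and $du$ a simple pole of residue $4/\wp''=2$; at $\omega=-\tfrac13$ one has $\wp''=6\cdot\tfrac19-\tfrac23=0$, so $\wp'+c$ has a double zero, and a short Laurent computation (using $\wp'''=12\wp\wp'$, $\wp''''=12(\wp'^2+6\wp^3-\tfrac{g_2}{2}\wp)$) shows $du$ has a double pole of residue $-2$. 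Hence the periods of $u$ coming from small loops around its logarithmic branch points are $2\pi i\cdot(\pm2)=\pm4\pi i\in\Lambda$. This is already the null-geodesic incarnation of the Gibbons--Hawking imaginary period, and it shows $e^{u/2}$ is single-valued and holomorphic near the horizon point, with a simple zero there.

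It remains to evaluate the two cycle periods $P_j=\oint_{\gamma_j}du$ over generators $\gamma_1,\gamma_2$ of the period lattice $\Lambda'$ of the curve. Here I would use the decomposition
\[ \frac{4}{\wp'(w)+c}=\bigl(\wp'(w)-c\bigr)\left(\frac{1}{\wp(w)-\tfrac23}-\frac{1}{\wp(w)+\tfrac13}-\frac{1}{(\wp(w)+\tfrac13)^2}\right) \]
and integrate term by term: the $\wp'\cdot(\wp-\alpha)^{-1}$ pieces integrate to $\log(\wp-\alpha)$ and the $\wp'\cdot(\wp+\tfrac13)^{-2}$ piece is exact, so they contribute to $P_j$ only elements of $2\pi i\,\mathbb{Z}$ (indeed $0$ for the exact piece); the remaining pieces are handled with $\tfrac{1}{\wp(w)-\wp(a)}=\tfrac{1}{\wp'(a)}\bigl(\zeta(w-a)-\zeta(w+a)+2\zeta(a)\bigr)$ and its derivative in $a$, where the vanishing $\wp''(a_2)=0$ at $\wp(a_2)=-\tfrac13$ kills the otherwise troublesome $\zeta$-term and reduces the double-pole piece to a combination of $\wp(w\pm a_2)$. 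Evaluating the resulting quasi-periods of $\zeta$ and $\sigma$ by the Legendre relation ($\omega_1\eta_2-\omega_2\eta_1=2\pi i$ in a suitable normalisation), together with the relation $a_1+2a_2\in\Lambda'$ forced by $\operatorname{div}(\wp'+c)=[a_1]+2[a_2]-3[0]$ (equivalently the duplication identity $\wp(2a_2)=-2\wp(a_2)=\tfrac23=\wp(a_1)$), one should find $P_1,P_2\in4\pi i\,\mathbb{Z}$.

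Granting these computations, every period of $u$ lies in $\Lambda$, so $u$ descends to a well-defined $\mathbb{C}/\Lambda$-valued function of $z$ and $e^{u/2}$ is single-valued and holomorphic; since $u=t+r-1+\tfrac12\ln((r-1)^2)$, the imaginary period $4\pi i$ of $u$ is the $4\pi i$ imaginary period of $t$ that gives the Hawking temperature $(4\pi)^{-1}$. The hard part is the cycle-period calculation: not the individual integrals, which are classical, but tracking the $\sigma$-function branch choices and quasi-periods carefully enough to see that $P_1,P_2$ are \emph{even} multiples of $2\pi i$ rather than merely multiples of it. The two structural facts that make this succeed --- $\wp''(a_2)=0$ and $a_1+2a_2\in\Lambda'$ --- are themselves forced by the universal normalisation $g_2=\tfrac43$, $g_3+c^2=\tfrac8{27}$, so there are no parameters to tune.
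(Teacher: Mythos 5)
Your first two paragraphs are correct and are essentially the paper's own argument in \S\ref{TheCubic}, made more explicit: the paper likewise solves $Q(\omega)\Omega^2+U\Omega-H=0$, notes that the branch points of $\Omega$ coincide with those of the elliptic curve, and computes the residues of $du=\Omega\,ds$ at the simple pole over the horizon $\omega=\tfrac23$ and the double pole over $\omega=-\tfrac13$ (scri), obtaining $\pm 2$. Your closed form $\frac{du}{dz}=\frac{4}{\wp'(z-z_0)+c}$ and the factorization $8Q(\omega)=q(\omega)-c^2=4\bigl(\omega-\tfrac23\bigr)\bigl(\omega+\tfrac13\bigr)^2$ package this cleanly, and your signs ($+2$ at the simple pole, $-2$ at the double pole) are the ones consistent with the residue theorem; the paper's summary sentence ``residue $2$ around each of the two poles'' is off by a sign at one of them, which is harmless since either sign generates $\Lambda=4\pi i\,\mathbb{Z}$.

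Your third paragraph, however, rests on a misreading of what must be shown, and the computation you defer to there would not come out as you hope. The parameter $z$ is the rescaled complex \emph{affine parameter}: it ranges over the simply connected plane $\mathbb{C}$, not over the quotient torus $\mathbb{C}/\Lambda'$. Consequently the only multivaluedness of $u(z)=\int^z du$ comes from loops encircling the poles of the integrand in $\mathbb{C}$, i.e.\ from the residues; the homology periods $P_j=\oint_{\gamma_j}du$ contribute nothing to the monodromy of $u$ as a function of $z\in\mathbb{C}$. This is exactly the paper's stance (``it suffices to compute these residues''). Worse, the claim $P_1,P_2\in 4\pi i\,\mathbb{Z}$ is false: $du$ has a second-kind part (from the double pole) whose cycle periods are multiples of the Weierstrass quasi-periods $\eta_j$, and a third-kind part with periods of the shape $2\zeta(a)\omega_j-2a\eta_j$; for the real rectangular lattice arising here these have nonzero real part, reflecting the fact that the retarded time $u$ genuinely advances as the geodesic completes a radial cycle. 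So $e^{u/2}$ is single-valued and holomorphic in $z$ (away from the points over scri, where the $(z-z_2)^{-1}$ term in $u$ produces essential singularities), but it is \emph{not} doubly periodic, and the theorem does not assert that it is. If you delete the third paragraph, what remains is a complete and correct proof, coinciding with the paper's.
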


The proof, conducted in \S\ref{TheCubic}, involves first showing that the meromorphic one-form $du = \Omega ds$ has one single pole and one double pole on the Weierstrass elliptic curve and then that the residue around each pole is $\pm 2$.  Thus $u$ has an imaginary period $4\pi i$.

\section{The complex metric}

\begin{figure}
\includegraphics[scale=0.5]{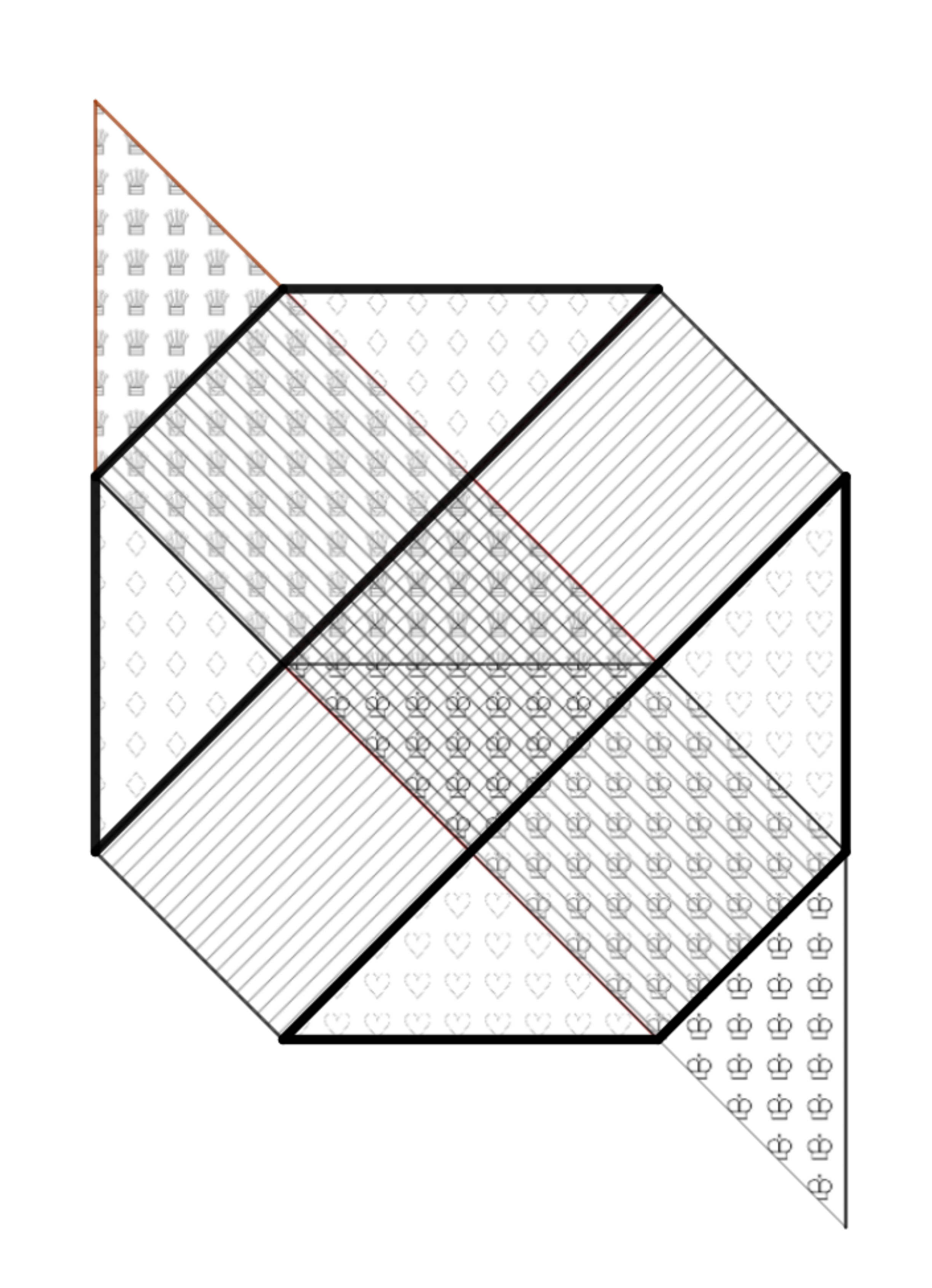}
\caption{Depiction of the extended space-time: $(u,x)$ coordinate patch (north-west hatching); $(v,x)$ coordinate patch (north-east hatching); $(u,y)$ coordinate patches (chessmen); $(v,y)$ coordinate patches (suits).  Two Kruskal diamonds are also shown, with $x>0$ and $x<0$.}
\end{figure}

Theorem \ref{uTheorem} implies that there is a natural analytic complexification of the Kruskal--Szekeres  string.  Specifically, consider at the hypersurface for $(p, q, x) \in \mathbb{C}^3$ given by the following equations:
\[ 0 = pq - (x^2 - 1)e^{x^2 - 1}, \]
\[ 0 = p dq + q dp - 2x^3e^{x^2 - 1} dx.\]
This surface is everywhere smooth since the coefficients of the above differential only  vanish when  $p = q = x = 0$, which is not a point of the surface.  In particular it is smooth at the pair of points, $p = q =0$, $x =\pm 1$.  Locally $x$ is a function of $p$ and $q$, provided $x \ne 0$.  Also $p$ is a function of $x$ and $q$, provided $q \ne 0$.  Also $q$ is a function of $x$ and $p$, provided $p \ne 0$.   When $pq \ne 0$, so when $x^2 - 1\ne 0$. we have:
\[ \frac{dp}{p} + \frac{dq}{q}  = \frac{2x^3e^{x^2 - 1} dx}{pq} = \frac{2x^3 dx}{x^2 - 1}.\]
In the complex, we cannot directly use the formula \eqref{KSg} for the metric, since it is not single-valued in the $p$ and $q$ coordinates, owing to the branching of the Lambert $W$ function.  Instead we use the formula, based on the idea that $x^2  = 1 + W(pq)$:
\[  g_\Sigma =  -\frac{ 4 dpdq}{x^6e^{x^2 -1}}  =   \frac{ 4 (1 -x^2) dpdq}{pqx^6}.\]
This metric is holomorphic and non-singular everywhere except where   $x = 0$.    We can rewrite this metric in $(p, x)$ coordinates as:
\[ g_{\Sigma} = \frac{4dp}{x^6p^2}\left((x^2 - 1)dp - 2px^3dx\right).\]
Alternatively, in $(q, x)$ coordinates, we have:
\[ g_{\Sigma} = \frac{4dq}{x^6q^2}\left((x^2 - 1)dq - 2qx^3dx\right).\]
The Hamiltonian for the canonical one-form $\alpha = Pdp + Qdq$ is:
\[ H_\Sigma = \frac{x^6pqPQ}{2(1 - x^2)}.\]
The Hamiltonian for the canonical one-form $\alpha = Rdp + Xdx$ is:
\[ H_\Sigma =  \frac{X}{8}((1 - x^2)X - 2px^3R).\]
The Hamiltonian for the canonical one-form $\alpha = Sdq + Ydx$ is:
\[ H_\Sigma =  \frac{Y}{8}((1 - x^2)Y - 2qx^3S).\]
Here we have the overlap relations:
\[ P = R + \frac{X(x^2 - 1)}{2x^3p}  =  \frac{Y(x^2 - 1)}{2x^3p}, \hspace{10pt}Q =  \frac{X(x^2 - 1)}{2x^3q}  = S + \frac{Y(x^2 - 1)}{2x^3q}, \]
\[  R =  - \frac{qS}{p}, \hspace{10pt} X = Y + \frac{2x^3 q S}{x^2 -1}, \hspace{10pt} S= - \frac{pR}{q}, \hspace{10pt}  Y = X + \frac{2x^3 p R}{x^2 -1}.\]
Next, we want to transition to where $x$ is allowed to be infinite.  Introduce the complex-valued variable $y$ which satisfies $x^2y=1$ when $y\not=0$.

Formally, we have:
\[ g_\Sigma =  \frac{ 4 dp}{x^6p^2}\left((x^2 - 1)dp - 2px^3 dx\right)  =  \frac{4dp}{p^2}\left( pdy +  (y^2 - y^3) dp\right).\]
This is globally defined and non-singular on the space of all $(p, y) \in \mathbb{C}^2$, with $p \ne 0$ and $y$ arbitrary.   Here $q$ goes to infinity, when $y$ goes to zero.  By symmetry, we also have:
\[ g_{\Sigma} =  \frac{4dq}{q^2}\left( qdy +  (y^2 - y^3) dq\right).\]
This is globally defined and non-singular on the space of all $(q, y) \in \mathbb{C}^2$, with $q \ne 0$ and $y$ arbitrary.  Here $p$ goes to infinity, when $y$ goes to zero. 

The surface is patched over the points $x=0$ and $y=0$ by projectivizing each coordinate independently, leading to the equation $x^2y=a^2b$, where $(x,a)$ and $(y,b)$ are homogeneous coordinates on a pair of copies of $\mathbb{CP}^1$.

\section{The cubic}\label{TheCubic}

\begin{lemma}
The cubic
$$q(\omega)=4\omega^3-g_2\omega - g_3=q(\omega)$$
with $g_2=4/3$, $g_3=\frac{8}{27}-\frac{2U^2}{H}$, has discriminant
$$\Delta q = 16(g_2^3-27g_3^2) =  \frac{64U^2}{H^2}(8H-27U^2)$$
\end{lemma}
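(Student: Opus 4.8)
The plan is to treat this as a direct computation, organized around two observations: the discriminant formula for a depressed cubic, and a difference-of-squares factorization that produces the claimed factored form. First I would pin down the first equality, namely that the polynomial discriminant of $4\omega^3 - g_2\omega - g_3$ equals $16(g_2^3 - 27 g_3^2)$. This follows from the general cubic discriminant $\Delta = 18abcd - 4b^3d + b^2c^2 - 4ac^3 - 27a^2d^2$ applied with $a = 4$, $b = 0$, $c = -g_2$, $d = -g_3$: the first three terms drop out because $b = 0$, leaving $-4ac^3 - 27a^2d^2 = 16 g_2^3 - 432 g_3^2 = 16(g_2^3 - 27g_3^2)$. (Equivalently, one can invoke the classical identity that for the Weierstrass cubic the modular discriminant is $g_2^3 - 27 g_3^2$, and the polynomial discriminant differs from it by the factor $4^2 = 16$ coming from the leading coefficient.)

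Next I would substitute the given values. With $g_2 = 4/3$ we have $g_2^3 = 64/27$, and it is cleanest to write $g_3$ over the common denominator used earlier in the paper, $g_3 = \dfrac{8H - 54U^2}{27H}$, so that $27 g_3^2 = \dfrac{(8H - 54U^2)^2}{27 H^2}$. Then
\[
g_2^3 - 27 g_3^2 = \frac{64 H^2 - (8H - 54U^2)^2}{27 H^2}.
\]
The key manoeuvre is to recognize the numerator as a difference of two squares, $64 H^2 = (8H)^2$, giving
\[
(8H)^2 - (8H - 54U^2)^2 = \bigl(8H - (8H - 54U^2)\bigr)\bigl(8H + (8H - 54U^2)\bigr) = 54 U^2 \,(16H - 54U^2) = 108\, U^2 (8H - 27U^2).
\]

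Finally I would assemble the pieces: $\Delta q = 16(g_2^3 - 27 g_3^2) = 16 \cdot \dfrac{108\, U^2 (8H - 27U^2)}{27 H^2} = \dfrac{64\, U^2}{H^2}(8H - 27U^2)$, which is the asserted formula. There is no real obstacle here; the computation is routine. The only points demanding a little care are the normalization convention for "discriminant" (the factor $16$ relative to $g_2^3 - 27 g_3^2$) and carrying the $27H$ denominator consistently through the squaring and the difference-of-squares step, so that the final cancellation $108/27 = 4$ and the resulting coefficient $64$ come out correctly. As a consistency check one notes that $\Delta q = 0$ exactly when $U = 0$ or $8H - 27U^2 = 0$, matching the degenerate cases excluded earlier when the roots of $q$ coincide.
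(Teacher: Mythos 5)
Your computation is correct, and the paper itself offers no proof of this lemma at all --- it is stated as a routine verification --- so your direct calculation (general cubic discriminant formula with $b=0$, then the difference-of-squares factorization of $64H^2-(8H-54U^2)^2$) is exactly the argument the authors leave implicit. The only cosmetic quibble is your parenthetical attribution of the factor $16$ to ``$4^2$ from the leading coefficient'': the precise scaling is $\Delta(af)=a^{2n-2}\Delta(f)$ with $a=4$, $n=3$, combined with the $1/16$ in the monic Weierstrass discriminant, but your main derivation via the explicit formula does not rely on this and is sound.
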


There are three cases, depending on whether this discriminant $\Delta q$ is positive, negative, or zero.  In all cases, the point at scri is $\omega=-1/3$, which is a critical point of the cubic $q$.  Furthermore, the horizon is the point $\omega=2/3$, which is to the right of the second critical point of the cubic (see Figure \ref{cubicplot}).


\subsubsection*{Case 1: $\Delta q>0$}

In this case, there are three distinct real roots $e_1<e_2<e_3$.  Note that $q'(\omega)=0$ at $\omega=\pm 1/3$.  So $e_1<-1/3$, $e_3>0$, and $-1/3<e_2<1/3$.  


The three roots divide up the real $\omega$-axis into four parts, labeled $A,B,C,D$, respectively.  The regions $B$ and $D$ correspond to places where the velocity $\omega'(z)$ is real, so a portion of these curves corresponds to the real null geodesic in the original Schwarzschild geometry, being traced out for real values of the affine parameter $z$.  On the regions $A$ and $C$, the velocity is imaginary, and so a real curve is traced out for {\em imaginary} values of the affine parameter $z$.

{\centering{\begin{figure}
\hspace{1cm}\includegraphics[scale=1]{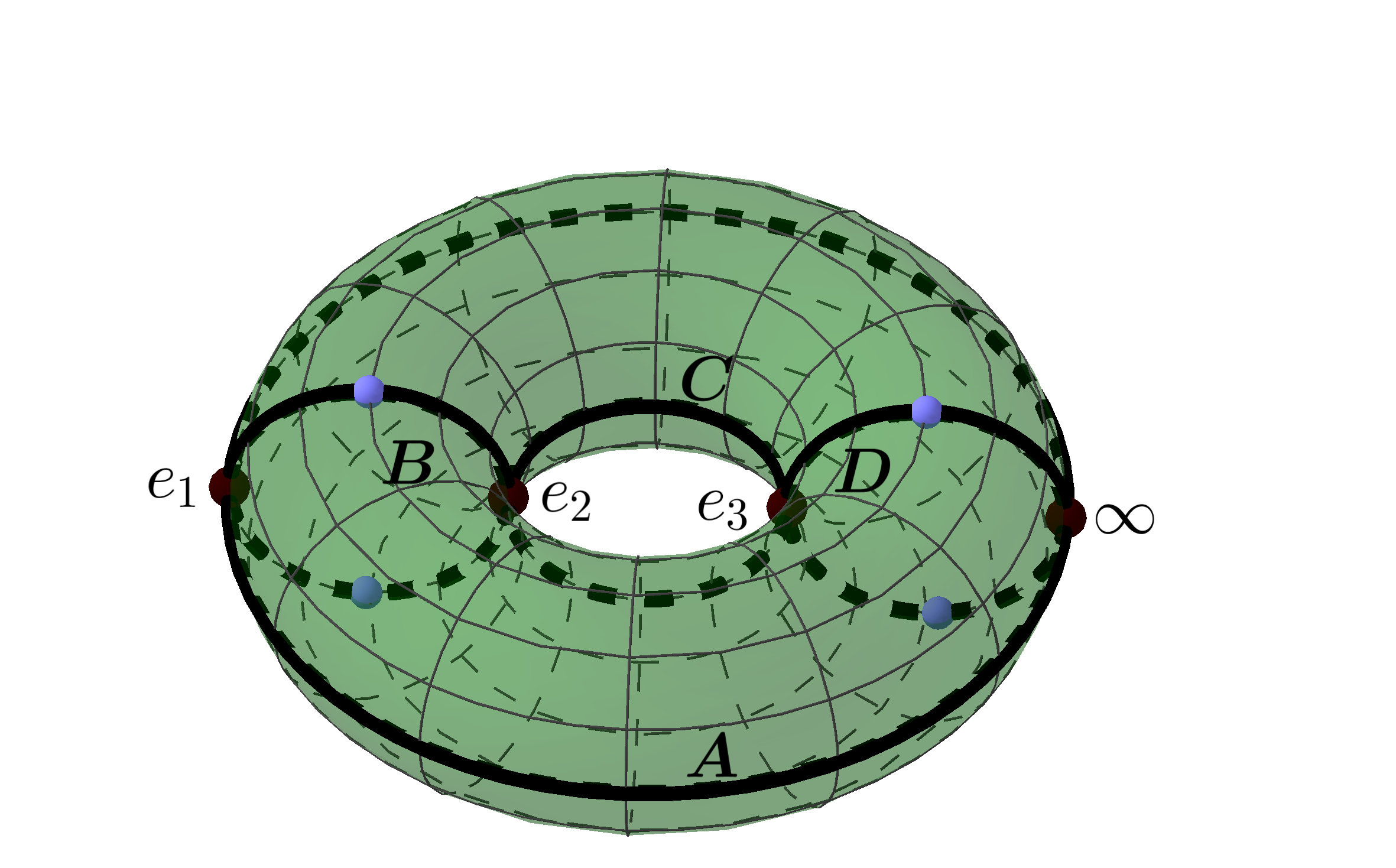}
\caption{The complexification of the null geodesic.  The roots $e_1,e_2,e_3$ and the point $\omega=\infty$ (i.e., $r=0$) are shown.  The four periods, corresponding to real values of the parameter $\omega$, are marked $A,B,C,D$, with periods $B$ and $D$ real.  The two points on $B$ where the geodesic crosses scri, and the two points on $D$ where it crosses the horizon are also marked.}
\end{figure}}}

The various bits of the null geodesic have the following interpretations in the Schwarzschild geometry.

$B$: On one portion of $B$, the geodesic begins at $\scri^-$, where $\omega=-1/3$, comes to the closest point $e_2$ to the singularity, and then heads back to $\scri^+$.  To interpret the other portion of $B$, we extend the Schwarzschild solution conformally past $\scri$ into a region of anti-Schwarzschild.  The portion of the geodesic there goes from $\scri$ to closest point $e_1$, and back to $\scri$.

$D$: the geodesic begins at the white hole singularity, reaches a furthest point $e_3$, and then falls into the black hole singularity.

$A$: an imaginary geodesic begins at the white hole singularity, reachest the furthest point $e_1$, and then falls back into the black hole singularity.

$C$: This is an imaginary geodesic that makes a cycle between $e_2$ and $e_3$.

\subsubsection*{Case 2: $\Delta q<0$}  If $\Delta q<0$, then there is a single real root $e_1$, which is negative.  There are now two parts: $A$ and $D$.  The geodesic segment $D$ contains all of the special points: $\scri$, the horizon, and the singularity.  Beginning at $\scri^-$, the geodesic crosses the horizon and falls into the black hole.  It re-emerges from the white hole, crosses the horizon and then $\scri^+$ into anti-Schwarzschild to reach a closest point $e_1$ to the singularity of anti-Schwarszchild, and then heads back out, via $\scri$, into $\scri^-$.

\subsection*{The $u$ coordinate}
The evolution of the $u$ coordinate is likewise obtained by Hamilton's equations:
$$\frac{du}{ds} = \{H,u\} = -\Omega.$$
where $\Omega$, in turn, solves the quadratic equation
\begin{equation}\label{Quadraticpomega}
Q(\omega)\Omega^2+U\Omega - H = 0.
\end{equation}
The discriminant of this quadratic polynomial in $\Omega$ is $\frac{H}{2}q(\omega)$ where $q(\omega)=4\omega^3-g_2\omega-g_3$.  This relates the two cubics $Q$ and $q$.  In addition, it implies that the branch points of the solutions to \eqref{Quadraticpomega} coincide with the branch points of the elliptic curve.

This differential equation is meromorphic in the affine parameter, and so $u$ defines a function of $z$ provided its values are taken in $\mathbb C/\Lambda$ where $\Lambda$ is the lattice generated by the residues.

To prove Theorem \ref{uTheorem}, it suffices to compute these residues:

\begin{lemma}
The meromorphic one-form $\Omega dz$ has just a single and a double pole on $E$.  At the single pole, the residue is:
$$\mathop{res}\limits_{z=z_1}\Omega\,dz = -\epsilon\sqrt{2H}$$
and at the double pole, it is:
$$\mathop{res}\limits_{z=z_2}\Omega\,dz = \epsilon\sqrt{2H},$$
where $\epsilon=\pm1$ depends on the chosen branch.
\end{lemma}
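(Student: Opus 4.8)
The plan is to exhibit $\Omega$ as an explicit rational function of the Weierstrass coordinates and read off its poles from the factorisations of the two cubics $Q$ and $q$. Solving the quadratic \eqref{Quadraticpomega} and using the identity, noted above, that its discriminant equals $\frac H2 q(\omega)=\frac H2 y^2$, one has on $E$
\[ \Omega=\frac{-U+\epsilon\sqrt{H/2}\,y}{2Q(\omega)},\qquad \omega=\wp(z-z_0),\quad y=\wp'(z-z_0), \]
with $\epsilon=\pm1$ recording the orientation of the affine parameter (equivalently, the choice of $\wp'$ versus $-\wp'$); so $\Omega\,dz$ is a single-valued meromorphic one-form on $E$. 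A direct check gives $Q(-\frac13)=Q(\frac23)=0$, and since $Q$ has no quadratic term $Q(\omega)=\frac12(\omega+\frac13)^2(\omega-\frac23)$: scri, $\omega=-\frac13$, is a double root of $Q$ and the horizon, $\omega=\frac23$, a simple root. Meanwhile $q(\frac23)=q(-\frac13)=2U^2/H\neq0$ in the generic case, so neither value lies on the branch locus of $E$; each is covered by two distinct points, on which $y=\pm U\sqrt{2/H}$. Finally $q'(\omega)=12\omega^2-\frac43$ vanishes only at $\omega=\pm\frac13$, so scri is a critical point of $q$ — this fact will be decisive.

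Since $\Omega=N/(2Q(\omega))$ with $N=-U+\epsilon\sqrt{H/2}\,y$ holomorphic away from $\omega=\infty$, poles of $\Omega\,dz$ can occur only over $\omega\in\{-\frac13,\frac23\}$ or at $\omega=\infty$. At the point $z=z_0$ over $\omega=\infty$ the Weierstrass expansions $\omega\sim(z-z_0)^{-2}$, $y\sim-2(z-z_0)^{-3}$, $Q(\omega)\sim\frac12(z-z_0)^{-6}$ give $\Omega\sim-2\epsilon\sqrt{H/2}\,(z-z_0)^3$: a triple zero, not a pole. Over $\omega=\frac23$ the numerator $N$ vanishes on exactly one of the two sheets, and, as $q'(\frac23)\neq0$, only to first order, just cancelling the simple zero of $Q$ and leaving $\Omega$ finite; on the other sheet $N=-2U\neq0$, so $\Omega$ has a simple pole, say at $z_1$. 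Over $\omega=-\frac13$, on one sheet $N$ vanishes to \emph{second} order — this is where $q'(-\frac13)=0$ is used, forcing $y-\sqrt{q(-\frac13)}=O((\omega+\frac13)^2)$ — and this cancels the double zero of $Q$, leaving $\Omega$ finite; on the other sheet $N=-2U\neq0$, so $\Omega$ has a double pole, say at $z_2$. Hence $\Omega\,dz$ has exactly one simple and one double pole, as claimed.

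It remains to compute the residues. Near $z_1$ one has $\omega+\frac13\to1$ and $\omega-\frac23=y(z_1)(z-z_1)+O((z-z_1)^2)$ because $d\omega/dz=y$, so $2Q(\omega)=y(z_1)(z-z_1)+O((z-z_1)^2)$ and
\[ \mathop{res}\limits_{z=z_1}\Omega\,dz=\frac{N(z_1)}{y(z_1)}=\frac{-2U}{\pm U\sqrt{2/H}}=\mp\epsilon\sqrt{2H}, \]
the sign being forced by the condition $N(z_1)\neq0$. For the double pole I would either invoke the residue theorem on the compact surface $E$ — the residues sum to zero and there are no other poles, so $\mathop{res}_{z_2}\Omega\,dz=-\mathop{res}_{z_1}\Omega\,dz$ — or expand directly at $z_2$: the identity $\wp''(-\frac13)=6(-\frac13)^2-\frac{g_2}2=0$ kills the quadratic term of $\omega+\frac13=y(z_2)(z-z_2)+O((z-z_2)^3)$, so $2Q(\omega)=-y(z_2)^2(z-z_2)^2\big(1-y(z_2)(z-z_2)+O((z-z_2)^2)\big)$ while $N=-2U+O((z-z_2)^2)$, and reading off the $(z-z_2)^{-1}$ coefficient of the quotient gives $2U/y(z_2)=\pm\epsilon\sqrt{2H}$. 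Up to the branch sign absorbed into $\epsilon$, these are the asserted values $-\epsilon\sqrt{2H}$ and $\epsilon\sqrt{2H}$.

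The step I expect to be the main obstacle is showing, in the second paragraph, that over scri $\omega=-\frac13$ only \emph{one} of the two points of $E$ is a pole of $\Omega$ — that the double zero of $Q$ there is only ``half-seen'' by $\Omega$ — since this hinges on scri being a critical point of the cubic $q$, together with the genericity hypotheses $U\neq0$ and $8H-27U^2\neq0$ that keep $\omega=-\frac13$ (and $\omega=\frac23$) off the branch locus of $E$. Everything else is routine Weierstrass bookkeeping; the only residual nuisance is tracking the branch sign $\epsilon$ and the sign of $U$, which in any case leave the magnitude $\sqrt{2H}$ and the opposite-sign relation between the two residues untouched.
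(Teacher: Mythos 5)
Your proposal is correct, and while it shares the paper's basic setup --- solve the quadratic for $\Omega$, use the discriminant identity $U^2+4Q(\omega)H=\tfrac{H}{2}q(\omega)$ to place the poles over the roots of $Q$, namely $\omega=2/3$ (simple) and $\omega=-1/3$ (double) --- your execution of the two key steps differs from the paper's in ways worth noting. First, you keep $\Omega=N/(2Q)$ in un-rationalized form and use the explicit factorization $Q(\omega)=\tfrac12(\omega+\tfrac13)^2(\omega-\tfrac23)$ to count orders of vanishing of numerator against denominator on each sheet; the paper instead rationalizes to $\Omega=2H/(U\pm y\sqrt{H/2})$ and computes limits involving $\wp''$ and L'H\^opital's rule. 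Your order-counting makes the ``half-seen double zero'' phenomenon at scri (hinging on $q'(-\tfrac13)=0$, hence $y-y_0=O((\omega+\tfrac13)^2)$) completely transparent, and your explicit check that $\Omega$ has a triple \emph{zero} at $\omega=\infty$ closes a small gap in the paper, which never verifies that there are no poles other than the two exhibited. Second, for the double-pole residue you invoke the global residue theorem on the compact curve $E$ (sum of residues of a meromorphic one-form vanishes), which yields the opposite-sign relation between the two residues for free and is considerably cleaner than the paper's direct fourth-order L'H\^opital computation; your backup direct expansion using $\wp''=\tfrac12 q'(\wp)=0$ at $\omega=-\tfrac13$ is also sound. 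The only discrepancy is that your sign assignment at the simple versus double pole is swapped relative to the paper's, but since both of you define $\epsilon=\pm1$ by the choice of branch, and since only the magnitudes $\sqrt{2H}$ and the opposite-sign relation feed into the period computation $du=-\sqrt{2/H}\,\Omega\,dz$ having residues $\mp2$, this is immaterial.
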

\begin{proof}
The solutions to \eqref{Quadraticpomega} are $\Omega = \frac{2H}{U \pm y\sqrt{H/2}}$ where $y$ is a solution of $y^2=q(\omega)$, so
\begin{equation*}
\Omega dz 
=\frac{-2H\,dz}{U\pm y\sqrt{H/2}}=\frac{-(8H)^{1/2}\epsilon\,dz}{\epsilon\left(\frac{2p^2_u}{H}\right)^{\!\!1/2}\!\!\! + y}
\end{equation*}

The residue at the pole $y=-\epsilon\left(\frac{2p^2_u}{H}\right)^{\!\!1/2}$ is
\begin{align*}
-(8H)^{1/2}\epsilon \lim_{z\to z_1} \frac{z-z_1}{y+\epsilon\left(\frac{2p^2_u}{H}\right)^{\!\!1/2}} &= -(8H)^{1/2}\epsilon \frac{1}{\wp''(z_1)} \\
&=\frac{-(8H)^{1/2}\epsilon}{2^{-1}q'(\omega)}.
\end{align*}
where $z_1$ is such that $\wp'(z_1)=y=-\epsilon\left(\frac{2p^2_u}{H}\right)^{\!\!1/2}$ at the pole.  

To find the value of $\omega$ at this value of $z_1$, we impose 
$$y^2=q(\omega)=\frac{2p^2_u}{H},$$
we are at a root of $Q(\omega)=0$.  These are $\omega=-1/3,2/3$.  Of these, only the latter gives a finite value of the residue, and so $z_1$ is such that $\omega=\wp(z_1)=2/3$.  We have $q'(2/3)=4$, so:
$$\mathop{res}\limits_{z=z_1}\sigma = -\epsilon\sqrt{2H}.$$

Next, for the double pole $z=z_2$, we have
$$\mathop{res}\limits_{z=z_2}\sigma = -(8H)^{1/2}\epsilon\lim_{z\to z_2}\frac{d}{dz}\left(\frac{(z-z_2)^2}{y+\epsilon\left(\frac{2p^2_u}{H}\right)^{\!\!1/2}}\right).$$
Let $c=\epsilon\left(\frac{2p^2_u}{H}\right)^{\!\!1/2}$.  Also, $y=\wp'(z)$.  Then this derivative is
$$\frac{d}{dz}\frac{(z-z_2)^2}{\wp'(z) + c} = \frac{2(z-z_2)(\wp'+c) - 2^{-1}q'(\wp(z))(z-z_2)^2 }{(\wp'(z)+c)^2}.$$
The numerator and denominator each has a zero of order four at $z=z_2$.  L'H\^opital's rule gives the value of the limit to be $4q'''(\wp(z_2))/3q''(\wp(z_2))$.  This is evaluated at the point $\omega=\wp(z_2)$ where $Q(\omega)$ has a double root, so $\wp(z_2)=-3^{-1}$.  Finally, $q'''\equiv 24$ and $q''(-3^{-1})=-8$.
\end{proof}

Recall that the affine parameter $ds$ is related to the Weierstrass parameter $dz$ by
$$ds=\sqrt{2/H}\,dz.$$
As a consequence of the theorem, the one-form 
$$du=-\Omega\,ds = -\sqrt{2/H}\Omega dz$$
has residue $2$ around each of the two poles.  Thus the coordinate $u$ has an (imaginary) period $4\pi i$.

\section{The cusp}\label{ResolveSingularity}

\begin{figure}
\centering\begin{tikzpicture}
\node {\includegraphics[scale=0.5]{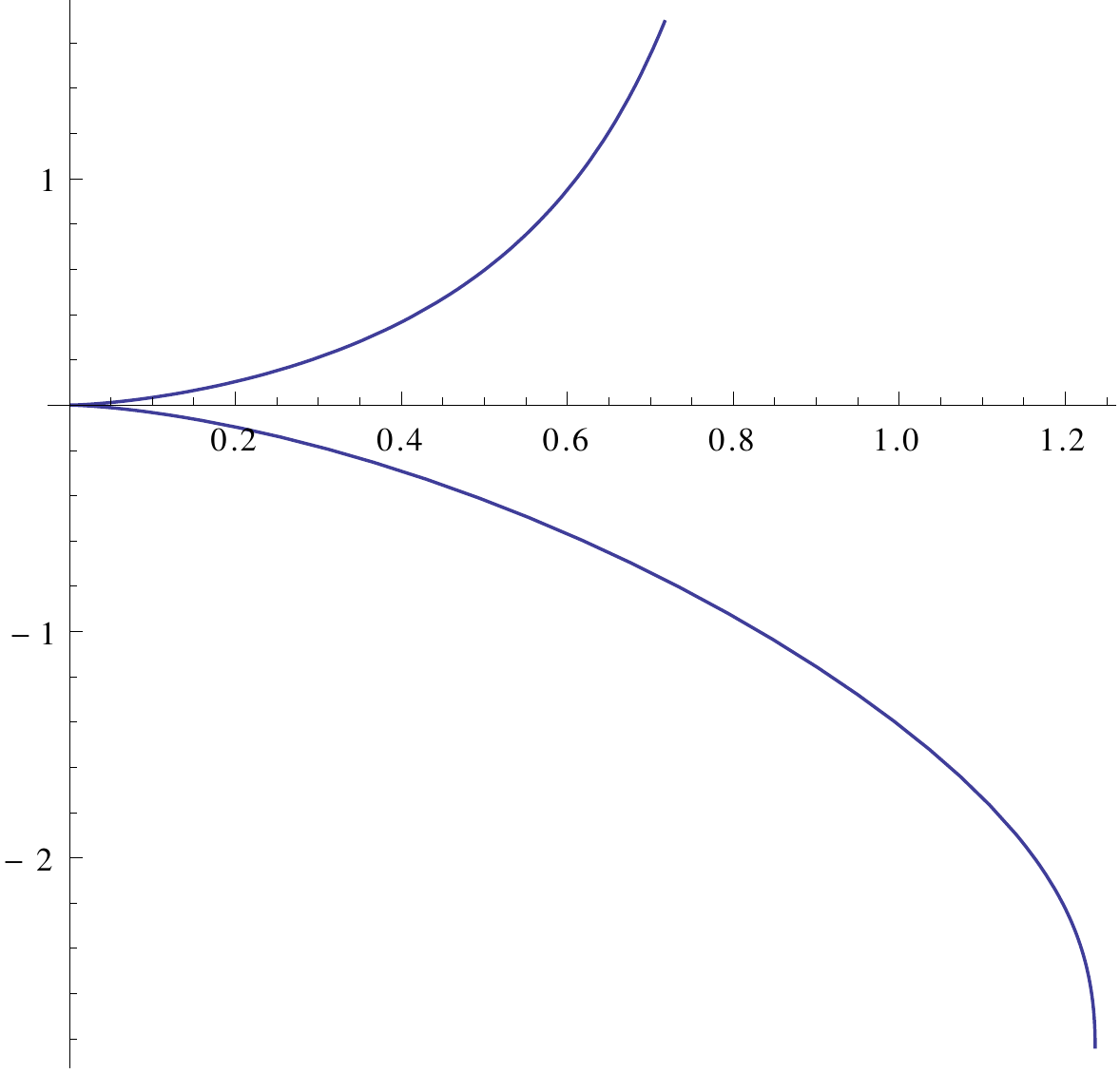}};
\node at (-2.6,3.1) {$p_v$};
\node at (3.2,0.7) {$r$};
\end{tikzpicture}
\caption{The momentum has a cuspidal singularity at $r=0$.  Shown here on a geodesic with generic initial values.}
\end{figure}

As the geodesic travels on segment $D$, beginning from the furthest point $e_3$, it first crosses the horizon and then hits the singularity.  Now, it must emerge again from the singularity.  However, at the singularity, the $(r,u)$ coordinates break down, and the geodesic has a cusp.

Indeed, in the $(r,u)$ coordinates, the Hamiltonian is
$$2H=(1-r)R^2 - 2r^2UR,$$
and $r$ satisfies $(\dot r)^2=r^4U^2+2Hr(1-r)$ which is zero at $r=0$.  On the other hand $\dot u=R$ satisfies the equation $r(1-r)(\dot u)^2 - 2r^2U\dot u -2H=0$, so that $\lim_{r\to 0^+}\dot u=-\infty$.

To resolve this cusp, put $r=x^2$.  Then the metric is
$$g_\Sigma = \frac{x^2-1}{x^6}du^2 - \frac{4}{x^4}du\,dx,$$
with Hamiltonian
$$H=\frac18(1-x^2)X^2-\frac12 x^2UX.$$


\begin{figure}
\centering\includegraphics[scale=0.5]{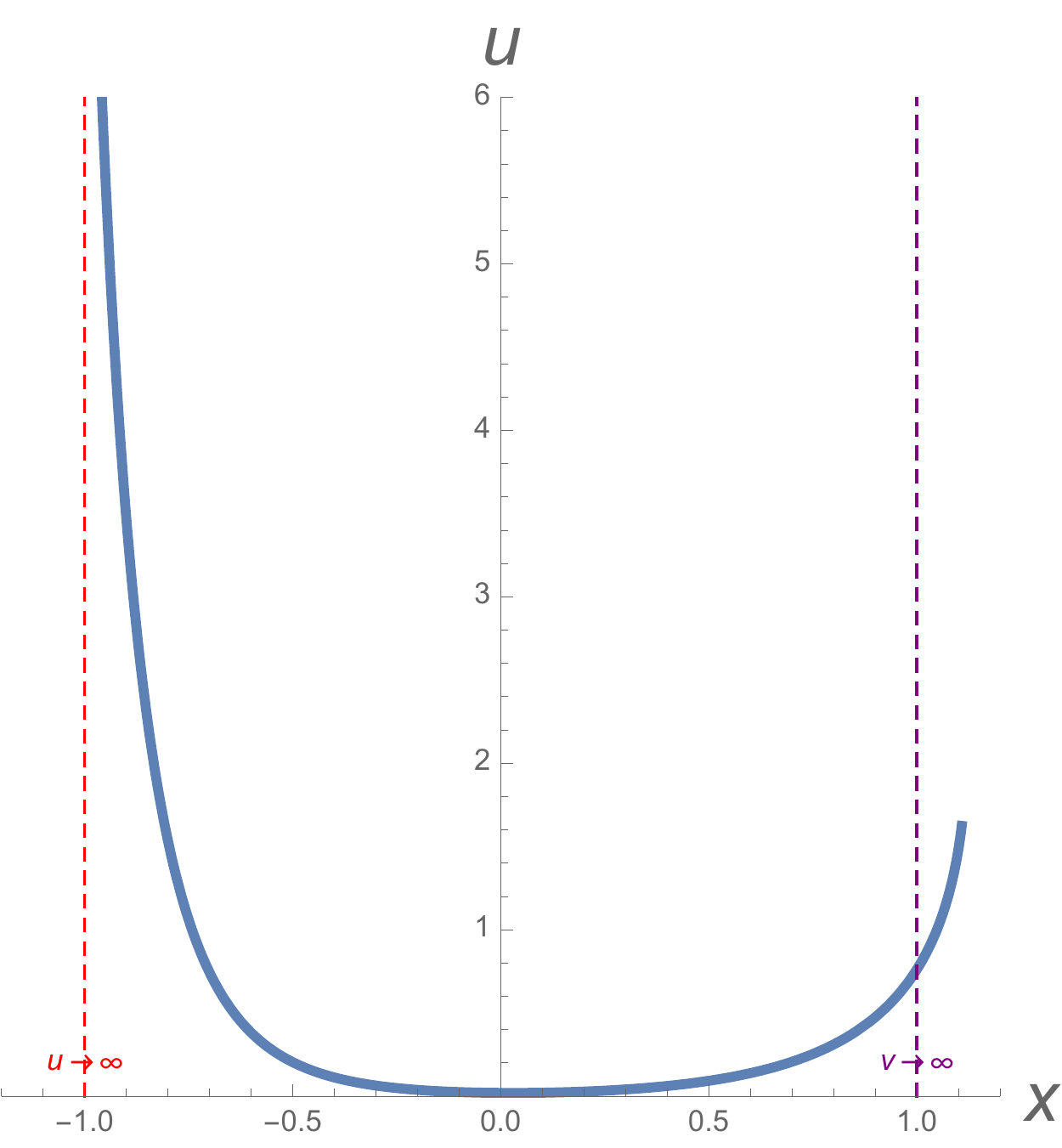}
\caption{The resolved singularity in the $(x,u)$ and $(x,v)$ coordinates.}
\end{figure}

\section{Null geodesic precession and non-commutativity}
We consider the  ``accessible'' region of the Schwarzschild solution to be the part with $t$  real and $r > 1$, so all finite points outside the horizon.   Consider the behavior of a generic null geodesic (so, in particular, non-radial) that passes through this region.  It will  eventually leave, either going out to $\mathscr{I}$, or crossing the horizon.    So for example, it may come in from $\mathscr{I}^-$, come to a point of closest approach to the horizon and then leave through $\mathscr{I}^+$.   The motion on the two-sphere $x.x= 1$ is part of a geodesic, staying in a fixed plane $n.x = 0$,  through the origin of $\mathbb{R}^3$, for $n$ some unit vector of $\mathbb{R}^3$.   If we parametrize the curve by the angle $\theta$ turned through, the angle will increase from say $\theta = \theta_1$ to $\theta = \theta_2$.   As the angular parameter $\theta$ (regarded as a real number, not considered modulo integer multiples of $2\pi$) increases further,  the trajectory will eventually return to the accessible region and trace out a new null geodesic in that region, going from say $\theta = \theta_3$, to $\theta = \theta_4$, before leaving the region again: the null geodesic has precessed.   In general this goes on for ever,   generating a countably infinite family of null geodesics in the accessible spacetime, all part of the same curve.  If tuned appropriately, it is possible to arrange that this family is finite.  As in the case of the torus $\mathbb{S}^1 \times \mathbb{S}^1$, equipped with a foliation of irrational slope, this gives rise to a non-commutative geometry in the style of Alain Connes \cite{connes2000noncommutative}. 

\begin{figure}
\definecolor{qqqqcc}{rgb}{0.,0.,0.8}
\definecolor{ffffff}{rgb}{1.,1.,1.}
\definecolor{ududff}{rgb}{0.30196078431372547,0.30196078431372547,1.}
\definecolor{ubqqys}{rgb}{0.29411764705882354,0.,0.5098039215686274}
\definecolor{ccqqqq}{rgb}{0.8,0.,0.}
\definecolor{xdxdff}{rgb}{0.49019607843137253,0.49019607843137253,1.}
\definecolor{zzttqq}{rgb}{0.6,0.2,0.}
\definecolor{uuuuuu}{rgb}{0.26666666666666666,0.26666666666666666,0.26666666666666666}
\definecolor{cqcqcq}{rgb}{0.7529411764705882,0.7529411764705882,0.7529411764705882}
\centering{
\begin{tikzpicture}[scale=0.5,line cap=round,line join=round,>=triangle 45,x=1.0cm,y=1.0cm]
\clip(-10.040679574675652,-9.129887686752454) rectangle (11.482721870214798,18.171181992607135);
\fill[line width=2.pt,color=zzttqq,fill=zzttqq,fill opacity=0.10000000149011612
] (-4.,4.) -- (4.,4.) -- (8.,0.) -- (8.,-8.) -- cycle;
\fill[line width=0.4pt,color=ffffff] (-4.,4.) -- (4.,4.) -- (8.,8.) -- (4.,12.) -- cycle;
\fill[line width=2.pt,color=ffffff,fill=ffffff,fill opacity=0.5
] (-4.,-4.) -- (4.,-4.) -- (0.,0.) -- cycle;
\fill[line width=2.pt,color=ffffff,fill=ffffff,fill opacity=0.8
] (0.,8.) -- (-4.,12.) -- (4.,12.) -- cycle;
\fill[line width=0.1pt,color=qqqqcc,pattern=dots,pattern color=qqqqcc
] (-4.,4.) -- (4.,4.) -- (-4.,12.) -- (-8.,16.) -- (-8.,8.) -- cycle;
\fill[line width=2.pt,color=ffffff,fill=ffffff,fill opacity=0.550000011920929
] (4.,4.) -- (8.,8.) -- (4.,12.) -- (0.,8.) -- cycle;
\fill[line width=2.pt,color=ffffff,fill=ffffff,fill opacity=0.800000011920929
] (0.,0.) -- (-4.,-4.) -- (-8.,0.) -- (-4.,4.) -- cycle;
\fill[line width=2.pt,color=ffffff,fill=ffffff,fill opacity=0.550000011920929
] (4.,4.) -- (8.,0.) -- (8.,8.) -- cycle;
\fill[line width=2.pt,color=zzttqq,fill opacity=0] (-8.,8.) -- (-4.,12.) -- (-8.,16.) -- cycle;
\draw [line width=2.pt] (-4.,4.)-- (4.,4.);
\draw [line width=2.pt] (-4.,-4.)-- (4.,-4.);
\draw [line width=2.pt] (4.,4.)-- (8.,0.);
\draw [line width=2.pt] (4.,-4.)-- (8.,0.);
\draw [line width=2.pt] (-8.,0.)-- (-4.,4.);
\draw [line width=2.pt] (-4.,-4.)-- (-8.,0.);
\draw[line width=2.pt,color=zzttqq, smooth,samples=100,domain=0.0:1.0] plot[parametric] function{(1.0-t)**(3.0)*5.89+3.0*(1.0-t)**(2.0)*t*4.42+3.0*(1.0-t)*t**(2.0)*4.56+t**(3.0)*5.7,(1.0-t)**(3.0)*(-2.11)+3.0*(1.0-t)**(2.0)*t*(-1.3)+3.0*(1.0-t)*t**(2.0)*1.36+t**(3.0)*2.3};
\draw[dashed,line width=2.pt,color=zzttqq, smooth,samples=100,domain=0.0:1.0] plot[parametric] function{(1.0-t)**(3.0)*5.7+3.0*(1.0-t)**(2.0)*t*6.840000000000001+3.0*(1.0-t)*t**(2.0)*6.02+t**(3.0)*5.35,(1.0-t)**(3.0)*2.3+3.0*(1.0-t)**(2.0)*t*3.2399999999999993+3.0*(1.0-t)*t**(2.0)*4.5+t**(3.0)*5.35};
\draw [line width=2.pt] (8.,-8.)-- (8.,0.);
\draw [color=ccqqqq](2.5940015094930784,-1.7649479592973094) node[scale=0.75,anchor=north west] {$\!\!\!r=1$};
\draw [color=ccqqqq](2.1178200615972718,2.6476667912038336) node[scale=0.75,anchor=north west] {$\,\,r=1$};
\draw (-0.8487960466192439,4.1) node[scale=0.75,anchor=north west] {$0^+$};
\draw (0.3083305595932076,3.314320818257963) node[scale=0.75,anchor=north west] {$D^+_{II}$};
\draw (-0.6122869063386848,-3.9553826196180206) node[scale=0.75,anchor=north west] {$0^-$};
\draw (2.1225837720181103,-0.2) node[scale=0.75,anchor=north west] {$D^+_I$};
\draw (0.5225837720181103,-2.5) node[scale=0.75,anchor=north west] {$D^+_{IV}$};
\draw (-0.8345049153567278,6.330136654931406) node[scale=0.75,anchor=north west] {$D_{II}^-$};
\draw [line width=2.pt,color=ccqqqq] (-4.,-4.)-- (4.,4.);
\draw (5.70505363574568,-0.8316399051890503) node[scale=0.75,anchor=north west] {$\mathscr I^-$};
\draw (5.546326486447078,7.155517831284138) node[scale=0.75,anchor=north west] {$\mathscr I^-$};
\draw (5.578071916306798,-4.885527506391591) node[scale=0.75,anchor=north west] {$\mathscr I^+$};
\draw (2.2670197900541964,0.6064318593400474) node[scale=0.75,anchor=north west] {$e_3^+$};
\draw (4.816181599673508,0.647704710041445) node[scale=0.75,anchor=north west] {$e_2^+$};
\draw (6.276471373220647,3.727011406434329) node[scale=0.75,anchor=north west] {$e_1^-$};
\draw[dashed,line width=2.pt,color=zzttqq, smooth,samples=100,domain=0.0:1.0] plot[parametric] function{(1.0-t)**(3.0)*5.89+3.0*(1.0-t)**(2.0)*t*7.359999999999999+3.0*(1.0-t)*t**(2.0)*7.2+t**(3.0)*6.0600000000000005,(1.0-t)**(3.0)*(-2.11)+3.0*(1.0-t)**(2.0)*t*(-2.92)+3.0*(1.0-t)*t**(2.0)*(-5.38)+t**(3.0)*(-6.0600000000000005)};
\draw [line width=2.pt,color=zzttqq] (-4.,4.)-- (4.,4.);
\draw [line width=2.pt,color=zzttqq] (4.,4.)-- (8.,0.);
\draw [line width=2.pt,color=zzttqq] (8.,0.)-- (8.,-8.);
\draw [line width=2.pt,color=zzttqq] (8.,-8.)-- (-4.,4.);
\draw[line width=2.4pt,color=ubqqys, smooth,samples=100,domain=0.0:1.0] plot[parametric] function{(1.0-t)**(3.0)*1.5319784370428084+3.0*(1.0-t)**(2.0)*t*2.58+3.0*(1.0-t)*t**(2.0)*2.52+t**(3.0)*1.5319784370428084,(1.0-t)**(3.0)*(-1.5319784370428084)+3.0*(1.0-t)**(2.0)*t*(-0.98)+3.0*(1.0-t)*t**(2.0)*0.78+t**(3.0)*1.5319784370428084};
\draw[line width=2.4pt,color=ubqqys, smooth,samples=100,domain=0.0:1.0] plot[parametric] function{(1.0-t)**(3.0)*1.5319784370428084+3.0*(1.0-t)**(2.0)*t*0.5439568740856169+3.0*(1.0-t)*t**(2.0)*(-0.9614866347955835)+t**(3.0)*(-1.0),(1.0-t)**(3.0)*1.5319784370428084+3.0*(1.0-t)**(2.0)*t*2.283956874085617+3.0*(1.0-t)*t**(2.0)*2.9651210898010394+t**(3.0)*4.0};
\draw[line width=2.4pt,color=ubqqys, smooth,samples=100,domain=0.0:1.0] plot[parametric] function{(1.0-t)**(3.0)*(-1.0)+3.0*(1.0-t)**(2.0)*t*(-1.0385133652044165)+3.0*(1.0-t)*t**(2.0)*0.5439568740856169+t**(3.0)*1.5319784370428084,(1.0-t)**(3.0)*4.0+3.0*(1.0-t)**(2.0)*t*5.034878910198961+3.0*(1.0-t)*t**(2.0)*5.716043125914383+t**(3.0)*6.468021562957191};
\draw [line width=2.pt] (-4.,12.)-- (-8.,8.);
\draw [line width=2.pt] (4.,12.)-- (8.,8.);
\draw[line width=2.4pt,color=ubqqys, smooth,samples=100,domain=0.0:1.0] plot[parametric] function{(1.0-t)**(3.0)*1.5319784370428084+3.0*(1.0-t)**(2.0)*t*2.58+3.0*(1.0-t)*t**(2.0)*2.52+t**(3.0)*1.5319784370428084-0.0,-((1.0-t)**(3.0)*(-1.5319784370428084)+3.0*(1.0-t)**(2.0)*t*(-0.98)+3.0*(1.0-t)*t**(2.0)*0.78+t**(3.0)*1.5319784370428084-8.0)};
\draw (2.3622560796333578,8.220571315252118) node[scale=0.75,anchor=north west] {$e_3^-$};
\draw (1.1622560796333578,8.820571315252118) node[scale=0.75,anchor=north west] {$D_I^-$};
\draw [line width=0.4pt,color=ffffff] (-4.,4.)-- (4.,4.);
\draw [line width=0.4pt,color=ffffff] (4.,4.)-- (8.,8.);
\draw [line width=0.4pt,color=ffffff] (8.,8.)-- (4.,12.);
\draw [line width=0.4pt,color=ffffff] (4.,12.)-- (-4.,4.);
\draw (6.227271644763723,-3.461746873125009) node[scale=0.75,anchor=north west] {$e_1^+$};
\draw (6.943125400274776,-4.050618909197182) node[scale=0.75,anchor=north west] {$B^+_{im}$};
\draw (5.514581056587358,0.9016681489192087) node[scale=0.75,anchor=north west] {$B_I^+$};
\draw[line width=2.pt,color=zzttqq, smooth,samples=100,domain=0.0:1.0] plot[parametric] function{(1.0-t)**(3.0)*5.35+3.0*(1.0-t)**(2.0)*t*4.816181599673511+3.0*(1.0-t)*t**(2.0)*5.006654178831833+t**(3.0)*5.877968988403071,(1.0-t)**(3.0)*5.35+3.0*(1.0-t)**(2.0)*t*6.996790681985539+3.0*(1.0-t)*t**(2.0)*8.901516473568767+t**(3.0)*10.12203101159693};
\draw[line width=2.4pt,color=ubqqys, smooth,samples=100,domain=0.0:1.0] plot[parametric] function{(1.0-t)**(3.0)*1.5319784370428084+3.0*(1.0-t)**(2.0)*t*0.30833055959320765+3.0*(1.0-t)*t**(2.0)*(-0.2630871778817601)+t**(3.0)*(-0.32657803760120085),(1.0-t)**(3.0)*(-1.5319784370428084)+3.0*(1.0-t)**(2.0)*t*(-2.3363656967722783)+3.0*(1.0-t)*t**(2.0)*(-3.4474557418624947)+t**(3.0)*(-4.0)};
\draw [line width=2.pt,color=ffffff] (-4.,-4.)-- (4.,-4.);
\draw [line width=2.pt,color=ffffff] (0.,0.)-- (-4.,-4.);
\draw (4.133635898270712,7.631699279179945) node[scale=0.75,anchor=north west] {$B_I^-$};
\draw (5.885998794062325,5.250792039700911) node[scale=0.75,anchor=north west] {$B^-_{im}$};
\draw [line width=2.pt,color=ffffff] (0.,8.)-- (-4.,12.);
\draw [line width=2.pt,color=ffffff] (-4.,12.)-- (4.,12.);
\draw [line width=2.pt,color=ffffff] (4.,12.)-- (0.,8.);
\draw[line width=2.4pt,color=ubqqys, smooth,samples=100,domain=0.0:1.0] plot[parametric] function{(1.0-t)**(3.0)*1.5319784370428084+3.0*(1.0-t)**(2.0)*t*0.30833055959320765+3.0*(1.0-t)*t**(2.0)*(-0.2630871778817601)+t**(3.0)*(-0.32657803760120085),(1.0-t)**(3.0)*9.53197843704281+3.0*(1.0-t)**(2.0)*t*10.336365696772278+3.0*(1.0-t)*t**(2.0)*11.447455741862495+t**(3.0)*12.0};
\draw [line width=2.pt] (-4.,12.)-- (4.,12.);
\draw [line width=2.pt,color=ccqqqq] (-4.,4.)-- (4.,-4.);
\draw [line width=2.pt] (-4.,4.)-- (-8.,8.);
\draw [line width=2.pt,color=ccqqqq] (-4.,12.)-- (0.,8.);
\draw [line width=2.pt,color=ccqqqq] (4.,12.)-- (0.,8.);
\draw [line width=2.pt,color=ccqqqq] (-4.,4.)-- (0.,8.);
\draw [line width=2.pt,color=ccqqqq] (0.,8.)-- (4.,4.);
\draw [line width=2.pt] (8.,-8.)-- (4.,-4.);
\draw [line width=2.pt] (4.,4.)-- (8.,8.);
\draw (-0.7392686257775665,12.84603803228228) node[scale=0.75,anchor=north west] {$0^-$};
\draw (5.155853907288755,8.252316745111838) node[scale=0.75,anchor=north west] {$e_2^-$};
\draw [line width=0.4pt,color=qqqqcc] (-4.,4.)-- (4.,4.);
\draw [line width=0.4pt,color=qqqqcc] (4.,4.)-- (-4.,12.);
\draw [line width=2.pt] (-4.,12.)-- (-8.,16.);
\draw [line width=2.pt] (-8.,16.)-- (-8.,8.);
\draw [line width=0.4pt,color=qqqqcc] (-8.,8.)-- (-4.,4.);
\draw [line width=2.pt,color=ffffff] (4.,4.)-- (8.,8.);
\draw [line width=2.pt,color=ffffff] (8.,8.)-- (4.,12.);
\draw [line width=2.pt,color=ffffff] (4.,12.)-- (0.,8.);
\draw [line width=2.pt,color=ffffff] (0.,8.)-- (4.,4.);
\draw [line width=2.pt,color=ffffff] (0.,0.)-- (-4.,-4.);
\draw [line width=2.pt,color=ffffff] (-4.,-4.)-- (-8.,0.);
\draw [line width=2.pt,color=ffffff] (-8.,0.)-- (-4.,4.);
\draw [line width=2.pt,color=ffffff] (-4.,4.)-- (0.,0.);
\draw [line width=2.pt,color=ffffff] (4.,4.)-- (8.,0.);
\draw [line width=2.pt] (8.,0.)-- (8.,8.);
\draw [line width=2.pt,color=ffffff] (8.,8.)-- (4.,4.);
\draw [line width=2.pt,color=zzttqq] (-4.,12.)-- (-8.,16.);
\draw [line width=2.pt,color=zzttqq] (-8.,16.)-- (-8.,8.);
\begin{scriptsize}
\draw [fill=uuuuuu] (8.,0.) circle (2.0pt);
\draw [fill=uuuuuu] (1.5319784370428084,-1.5319784370428084) circle (2.0pt);
\draw [fill=xdxdff] (5.89,-2.11) circle (2.5pt);
\draw [fill=xdxdff] (5.7,2.3) circle (2.5pt);
\draw [fill=xdxdff] (5.35,5.35) circle (2.5pt);
\draw [fill=xdxdff] (4.820999864295976,0.25891687944570363) circle (2.5pt);
\draw [fill=xdxdff] (6.265983731152718,3.3406715690370974) circle (2.5pt);
\draw [fill=xdxdff] (6.06,-6.06) circle (2.5pt);
\draw [fill=uuuuuu] (1.5319784370428084,1.5319784370428084) circle (2.0pt);
\draw [fill=ududff] (-1.,4.) circle (2.5pt);
\draw [fill=uuuuuu] (1.5319784370428084,6.468021562957191) circle (2.0pt);
\draw [fill=xdxdff] (2.2788521849582937,0.16785799977249777) circle (2.5pt);
\draw [fill=xdxdff] (2.2788521849582937,7.832142000227503) circle (2.5pt);
\draw [fill=ududff] (6.9113799704150765,-3.987128049477743) circle (2.5pt);
\draw [fill=xdxdff] (-0.32657803760120085,-4.) circle (2.5pt);
\draw [fill=xdxdff] (-0.32657803760120085,12.) circle (2.5pt);
\draw [fill=uuuuuu] (1.5319784370428084,9.53197843704281) circle (2.0pt);
\draw [fill=ududff] (5.101890468410992,7.822171858338271) circle (2.5pt);
\end{scriptsize}
\draw (5.451090196867916,2.0762490537288656) node[scale=0.75,anchor=north west] {$\mathscr I^+$};
\end{tikzpicture}
}
\caption{The geodesic journey through the extended space-time.  Roman symbols $A,B,C,D$ label the portions of the geodesic corresponding to the real axis in the Figure \ref{cubicplot}.  Subscripts denote the quadrant of the Kruskal--Szekeres extension.  A superscript $\pm$ denotes whether the coordinate $x$ is positive or negative (if real), or if $x$ is an imaginary number with positive or negative ordinate.  Roots are labeled $e_1,e_2,e_3$.}
\end{figure}
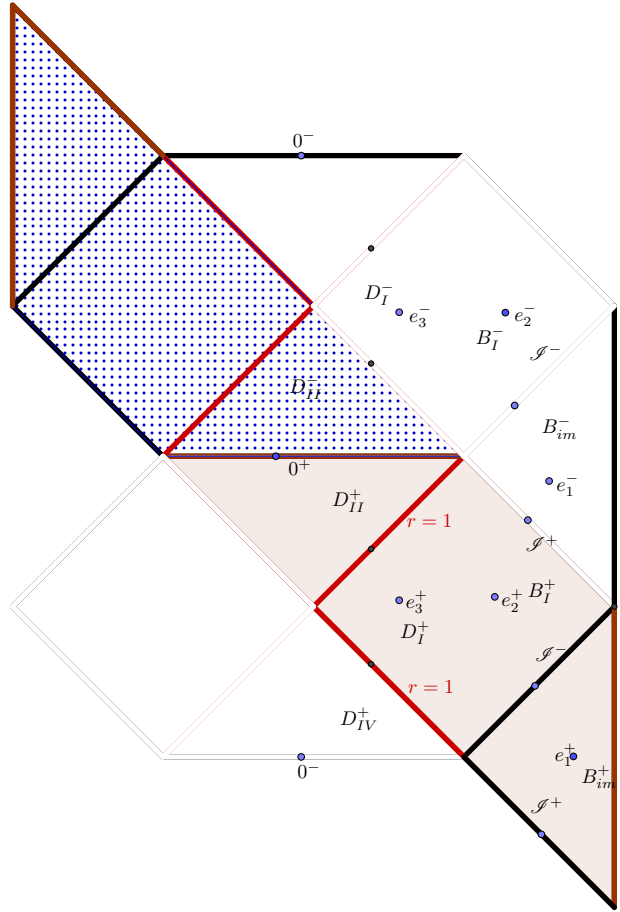

\section{Genus two: the geodesics are of Battaglini type}
At the level of  the Eddington-Finkelstein system, the geodesics complexify to elliptic curves, as described above.  However in the full system of Theorem 1 above, we have a rather different picture.  With coordinates $(u, r)$, we have:
\[ g_\Sigma =   -2r^{-2} du dr  + r^{-3}(r - 1)  du^2, \hspace{10pt} du = dt + \frac{rdr}{r - 1}.\]

Replacing $r$ by $x^2$, this gives:
\[ g_\Sigma =  - \frac{4du dx}{x^3}  +   \frac{(x^2 - 1) (du)^2}{x^6}.\]
For the contact one-form $\alpha = Udu + X dx$, we have the Hamiltonian:
\[ H_\Sigma =   \frac{X}{8}(X(1 - x^2) - 4 Ux^3) U.\]
Hamilton's equations, for the affine parameter $s$ are:
\[  \frac{du}{ds} = - \frac{Xx^3}{2},   \hspace{10pt} \frac{dU}{ds} = 0, \]
\[ \frac{dx}{ds} = \frac{1}{4}\left((1 - x^2)X - 2Ux^3\right), \hspace{10pt} \frac{dX}{ds}= \frac{xX}{4}(X + 6Ux).\]
Then $H_\Sigma = H$, a constant along the flow.  Eliminating $X$, we get the following equation for the evolution of $x$:
\[  4\left(\frac{dx}{ds}\right)^2 = U^2 x^6 - 2Hx^2 + 2H.\]
Putting $\displaystyle{y = \frac{2dx}{ds}}$, we see that the flow is a curve on the complex space with equation, for $(x, y) \in \mathbb{C}^2$:
\[ y^2 = S(x), \hspace{10pt} S(x) = U^2 x^6 - 2Hx^2 + 2H.\]
For $H > 0$, the sextic $S(x)$ has six pairwise distinct roots provided the quantity $U(8H - 27U^2)$ is non-vanishing. 

So in the generic case that $H > 0,\, U \ne 0$ and $8H - 27U^2 \ne 0$, the curve  $y^2 = S(x)$ is an hyperelliptic curve of genus two.    Now it is standard that the any genus two curve may be given by an equation of the form $y^2 = T(x)$, with $T$ a sextic polynomial, with pairwise distinct roots.  In the present case, the roots are pairwise distinct, but they come in pairs, in that if $x = a$ is a root, then so is $x = -a$ (where $0 \ne a \in \mathbb{C}$).   In other words, the curve has an extra symmetry: it is invariant under $(x, y) \rightarrow (\alpha x, \beta y)$, where $\alpha^2 = \beta^2 = 1$, in contrast to the generic curve which is only invariant under the standard hyperelliptic involution $(x, y) \rightarrow (x, \beta y)$ with $\beta^2 = 1$.   The genus two hyperelliptic curves with an extra symmetry were first analyzed in beautiful work of Giuseppe Battaglini \cite{battaglini1intorno}, \cite{battaglini1868intorno}, \cite{battaglini1866intorno}.   They may be constructed as follows:   pick two non-singular quadrics in complex projective three-space, in general position with respect to each other.   A general complex projective line then intersects the given quadrics in four points, which have an associated cross-ratio.  Battaglini requires that this cross-ratio be harmonic.    This condition is co-dimension one in the space of all lines in complex projective three-space, so gives an hypersurface in the Klein quadric, $\mathcal{K}$.    In turn it emerges that this hypersurface is itself quadratic, so the Battaglini line complex is the intersection of two non-singular quadrics in complex projective five-space, one of these the Klein quadric.     Consider the quadratic pencil defined by these quadrics, so all quadrics of the form $s\mathcal{J} + t\mathcal{K}$, where $\mathcal{J}$ is a quadric in general position with respect to the  Klein quadric $\mathcal{K}$ and $(s, t) \in \mathbb{C}^2$.   Then the associated hyperelliptic curve with the extra  symmetry is given by the equation $y^2 = \det( s\mathcal{J} + t\mathcal{K})$ and all such genus two hyperelliptic curves with an extra symmetry may be formed in this way.  So we have the Theorem:
\begin{theorem}In the extended Schwarzschild space given by Theorem 1 above, the generic null geodesics complexify to hyperelliptic curves of genus two, possessing an extra symmetry, so are associated to an harmonic line complex as discussed by Battaglini.
\end{theorem}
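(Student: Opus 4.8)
The plan is to combine three ingredients: the sextic differential equation for the complexified geodesic derived in the paragraph just above, the \emph{evenness} of that sextic, and the classical classification --- also recalled just above --- of genus two curves carrying an extra involution.

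First I would verify that the equation $4(dx/ds)^2 = U^2x^6 - 2Hx^2 + 2H$ obtained in the $(u,x)$ chart, in which $r = x^2$, is preserved by every transition map used to build the null geodesically complete extension of Theorem~1: those maps all respect $r = x^2$ and fix the conserved quantities $H = H_\Sigma$ and $U$, so with $y = 2\,dx/ds$ the curve $C\colon y^2 = S(x)$, $S(x) = U^2x^6 - 2Hx^2 + 2H$, is globally the complexification of the entire null geodesic rather than of a single coordinate patch. Writing $S(x) = P(x^2)$ with $P(w) = U^2w^3 - 2Hw + 2H$, the roots of $S$ are the $\pm\sqrt{w_i}$ with $w_i$ the roots of $P$; these are pairwise distinct exactly when $P$ has distinct nonzero roots, which (as recorded above) is the condition $U(8H - 27U^2) \neq 0$. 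So in the generic stratum $H > 0$, $U \neq 0$, $8H - 27U^2 \neq 0$, the curve $C$ is a smooth hyperelliptic curve of genus two, as the preceding paragraph already states.

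Next I would record the structural point. Because $S(-x) = S(x)$, the six branch points of $C$ fall into three pairs $\{a_i, -a_i\}$, and $\iota\colon (x,y)\mapsto (-x,y)$ is an automorphism of $C$ commuting with the hyperelliptic involution $\sigma\colon(x,y)\mapsto(x,-y)$; hence $C$ lies in the restricted family --- those invariant under $(x,y)\mapsto(\alpha x,\beta y)$ with $\alpha^2=\beta^2=1$ --- that Battaglini studied. Its elliptic quotients are $C/\langle\iota\rangle\colon y^2 = U^2w^3 - 2Hw + 2H$ in $w = x^2$, and $C/\langle\iota\sigma\rangle\colon \tilde y^2 = w(U^2w^3 - 2Hw + 2H)$ in $\tilde y = xy = dw/ds$; substituting $w = 3/(3\omega+1)$ and rescaling the affine parameter by $z = s\sqrt{H/2}$ turns the latter into $(d\omega/dz)^2 = q(\omega)$, which recovers the Eddington--Finkelstein elliptic curve of the earlier sections and confirms consistency. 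Given the extra involution $\iota$, the Battaglini classification quoted above applies verbatim: there is a pencil $s\mathcal{J} + t\mathcal{K}$ of quadrics in $\mathbb{P}^3$ in general position, with $\mathcal{K}$ the Klein quadric, such that $C$ is the determinant curve $y^2 = \det(s\mathcal{J} + t\mathcal{K})$; equivalently $C$ is the curve associated to the harmonic line complex $\{\ell : \ell\cap\mathcal{J},\ \ell\cap\mathcal{K}\ \text{in harmonic position}\}$, which is the assertion of the theorem.

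To make this identification explicit rather than merely cite the classification, one diagonalizes the pencil so that $\det(s\mathcal{J} + t\mathcal{K}) = \prod_{i=1}^{6}(s\lambda_i + t)$, observes that harmonicity of the cross-ratio of the four points $\ell\cap\mathcal{J}$, $\ell\cap\mathcal{K}$ forces the $\lambda_i$ into three $\pm$-pairs --- matching the pairing of the roots of $S$ --- and then fixes the one remaining modulus: since $C$ carries $\iota$, its Igusa--Clebsch invariants are rational functions of a single parameter, which may be taken to be the cross-ratio of $\{0, w_1, w_2, w_3\}$, itself a rational function of $8H - 27U^2$ over $U^2$ via $\sum w_i = 0$, $\sum_{i<j} w_iw_j = -2H/U^2$, $w_1w_2w_3 = -2H/U^2$; choosing $\mathcal{J}$ so that the determinant sextic realizes this modulus completes the match. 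The hard part is exactly this classical input --- verifying that harmonicity of the four-point cross-ratio along a general line is a single quadratic condition on the Pl\"ucker coordinates, and that the determinant sextics of the resulting pencils sweep out precisely the one-parameter family of $\pm$-symmetric sextics --- which is the substance of Battaglini's papers and of Bolza's later analysis of genus two curves with extra automorphisms. A self-contained treatment would compute the discriminant of the quartic cut on a generic line in Pl\"ucker coordinates and compare absolute invariants with those of $S$; but since we are taking the classification as given from the discussion above, the only genuinely new steps are the chart-independence of the sextic and the evenness of $S$, after which the harmonic-line-complex description is immediate.
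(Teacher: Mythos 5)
Your proposal follows essentially the same route as the paper: derive the sextic $4(dx/ds)^2 = U^2x^6 - 2Hx^2 + 2H$ in the $r=x^2$ chart, observe that genericity ($H>0$, $U(8H-27U^2)\neq 0$) gives six distinct roots and hence a smooth genus two hyperelliptic curve, note that evenness of the sextic supplies the extra involution $(x,y)\mapsto(\alpha x,\beta y)$, and then invoke Battaglini's classification of such curves via harmonic line complexes as classical input --- exactly as the paper does. Your additions (the chart-independence remark, and especially the computation of the two elliptic quotients with the verification that $C/\langle\iota\sigma\rangle$ reproduces the Weierstrass curve $(d\omega/dz)^2=q(\omega)$ of the earlier sections) are correct and worthwhile, though not needed for the theorem as stated. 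One small slip in your explicit-identification sketch: the pencil $s\mathcal{J}+t\mathcal{K}$ whose $6\times 6$ determinant gives the sextic lives in $\mathbb{CP}^5$ (where the Klein quadric sits), not in $\mathbb{CP}^3$; the two quadric surfaces in $\mathbb{CP}^3$ whose harmonic cross-ratio condition defines the line complex are a different pair, so you should not write $\ell\cap\mathcal{K}$ with $\mathcal{K}$ the Klein quadric.
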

Such genus two curves appear in a rather different space-time context in the authors' work \cite{HollandSparlingCosmology}.

\bibliographystyle{plain}
\bibliography{schwarzschild}

\begin{thebibliography}{10}

\bibitem{battaglini1intorno}
Giuseppe Battaglini.
\newblock Intorno ai sistemi di rette di primo ordine.
\newblock {\em Rend. Acc. Nap.}, 5:24--87, 1866.

\bibitem{battaglini1866intorno}
Giuseppe Battaglini.
\newblock {\em Intorno ai sistemi di rette di secondo grado}.
\newblock Stamperia del Fibreno, 1866.

\bibitem{battaglini1868intorno}
Giuseppe Battaglini.
\newblock Intorno ai sistemi di retter di primo grade.
\newblock {\em Giornale di Matematiche}, 6:24--36, 1868.

\bibitem{bekenstein1973black}
Jacob~D Bekenstein.
\newblock Black holes and entropy.
\newblock {\em Physical Review D}, 7(8):2333, 1973.

\bibitem{connes2000noncommutative}
Alain Connes.
\newblock Noncommutative geometry year 2000.
\newblock In {\em Visions in Mathematics}, pages 481--559. Springer, 2000.

\bibitem{1924Natur.113..192E}
Arthur {Eddington}.
\newblock {A Comparison of {W}hitehead's and {E}instein's Formul{\ae}}.
\newblock {\em Nature}, 113:192, February 1924.

\bibitem{einstein1914bemerkungen}
Albert Einstein.
\newblock Bemerkungen zu der {A}rbeit: {E}ntwurf einer verallgemeinerten
  {R}elativit{\"a}tstheorie und einer {T}heorie der {G}ravitation.
\newblock {\em Z. Math. Phys}, 62, 1914.

\bibitem{einstein1901feldgleichungen}
Albert Einstein.
\newblock {\em Die {F}eldgleichungen der gravitation}.
\newblock Sitzungsberichte der K{\"o}niglich Preußischen Akademie der
  Wissenschaften, 1915.

\bibitem{einstein1913entwurf}
Albert Einstein and Marcel Grossmann.
\newblock Entwurf einer verallgemeinerten {R}elativit{\"a}tstheorie und einer
  {T}heorie der {G}ravitation (draft of a generalized relativity theory and a
  theory of gravitation).
\newblock {\em Z. Math. Phys}, 62:225, 1913.

\bibitem{einstein1914kovarianzeigenschaften}
Albert Einstein and Marcel Grossmann.
\newblock Kovarianzeigenschaften der {F}eldgleichungen der auf die
  verallgemeinerte {R}elativit{\"a}tstheorie gegr{\"u}ndeten
  {G}ravitationstheorie.
\newblock {\em Z. Math. Phys}, 63, 1914.

\bibitem{1958PhRv..110..965F}
David {Finkelstein}.
\newblock {Past-Future asymmetry of the gravitational field of a point
  [article}.
\newblock {\em Physical Review}, 110:965--967, May 1958.

\bibitem{gibbons1977action}
Gary~W Gibbons and Stephen~W Hawking.
\newblock Action integrals and partition functions in quantum gravity.
\newblock {\em Physical Review D}, 15(10):2752, 1977.

\bibitem{hawking1971gravitational}
Stephen~W Hawking.
\newblock Gravitational radiation from colliding black holes.
\newblock {\em Physical Review Letters}, 26(21):1344, 1971.

\bibitem{hilbert1924grundlagen}
David Hilbert.
\newblock Die {G}rundlagen der {P}hysik.
\newblock {\em Mathematische Annalen}, 92(1):1--32, 1924.

\bibitem{HollandSparlingCosmology}
Jonathan Holland and George Sparling.
\newblock Cosmology: macro and micro.
\newblock {\em arXiv:1411.4716}, 2012.

\bibitem{kruskal1960maximal}
Martin~D Kruskal.
\newblock Maximal extension of {S}chwarzschild metric.
\newblock {\em Physical review}, 119(5):1743, 1960.

\bibitem{lemaitre1933univers}
Georges Lema{\^\i}tre.
\newblock L'univers en expansion.
\newblock In {\em Annales de la Soci{\'e}t{\'e} Scientifique de Bruxelles},
  volume~53, pages 51--85, 1933.

\bibitem{minkowski1909raum}
Hermann Minkowski.
\newblock Raum und {Z}eit.
\newblock {\em Jahresbericht der {D}eutschen {M}athematiker-{V}ereinigung, vol.
  18, p. 75-88}, 18:75--88, 1909.

\bibitem{pedoe2013geometry}
Dan Pedoe.
\newblock {\em Geometry: A comprehensive course}.
\newblock Courier Corporation, 2013.

\bibitem{1965PhRvL..14...57P}
Roger {Penrose}.
\newblock Gravitational collapse and space-time singularities.
\newblock {\em Physical Review Letters}, 14:57--59, January 1965.

\bibitem{PenroseRindler2}
Roger Penrose and Wolfgang Rindler.
\newblock {\em Spinors and space-time. {V}ol. 2}.
\newblock Cambridge Monographs on Mathematical Physics. Cambridge University
  Press, Cambridge, second edition, 1988.
\newblock Spinor and twistor methods in space-time geometry.

\bibitem{schouten1921konforme}
Jan~Arnoldus Schouten.
\newblock {\"U}ber die konforme {A}bbildungn-dimensionaler {M}annigfaltigkeiten
  mit quadratischer {M}a{\ss}bestimmung auf eine {M}annigfaltigkeit mit
  euklidischer {M}a{\ss}bestimmung.
\newblock {\em Mathematische Zeitschrift}, 11(1-2):58--88, 1921.

\bibitem{schwarzschild1916gravitationsfeld}
Karl Schwarzschild.
\newblock {\"U}ber das {G}ravitationsfeld einer {K}ugel aus inkompressibler
  {F}l{\"u}ssigkeit nach der {E}insteinschen {T}heorie.
\newblock In {\em {S}itzungsberichte der {K}{\"o}niglich {P}reussischen
  {A}kademie der {W}issenschaften zu {B}erlin, {P}hys.-{M}ath. {K}lasse,
  424-434 (1916)}, 1916.

\bibitem{synge1950gravitational}
John~L Synge.
\newblock The gravitational field of a particle.
\newblock In {\em Proceedings of the Royal Irish Academy. Section A:
  Mathematical and Physical Sciences}, volume~53, pages 83--114, 1950.

\bibitem{szekeres1960singularities}
Gy{\"o}rgy Szekeres.
\newblock On the singularities of a riemannian manifold.
\newblock {\em Publicationes Mathematicae Debrecen 7, 285 (1960)}, 7, 1960.

\bibitem{weyl1918reine}
Hermann Weyl.
\newblock Reine {I}nfinitesimalgeometrie.
\newblock {\em Mathematische Zeitschrift}, 2:384--411, 1918.

\end{thebibliography}
\end{document}